\newtheorem{theorem}{Theorem}
\newtheorem{definition}{Definition}
\newtheorem{claim}{Claim}
\newtheorem{lemma}{Lemma}
\newtheorem{fact}{Fact}
 \newcommand{\qedsymb}{\hfill{\rule{2mm}{2mm}}}  
 \newenvironment{proof}[1][]{\begin{trivlist}  
 \item[\hspace{\labelsep}{\bf\noindent Proof#1:\/}] 
 }{\qedsymb\end{trivlist}}
\def\ra{\rangle}
\newcommand{\be}{\begin{eqnarray}}
\newcommand{\ee}{\end{eqnarray}}
\newcommand\floor[1]{{\lfloor #1 \rfloor}}
\newcommand\ceil[1]{{\lceil #1 \rceil}}
\newcommand\ket[1]{{ |{#1} \rangle }}
\newcommand\bra[1]{{ \langle {#1} | }}
\def\PCP{{\sf{PCP}}}
\def\PCPP{{\sf{PCPP}}}
\def\qPCP{{\sf{qPCP}}}
\def\qPCPP{{\sf{qPCPP}}}
\def\NLTS{{\sf{NLTS}}}
\def\NP{{\sf{NP}}}
\def\P{{\sf{P}}}
\def\qLTC{{\sf{qLTC}}}
\def\LTC{{\sf{LTC}}}
\def\sLTC{{\sf{sLTC}}}
\def\CSS{{\sf{CSS}}}
\newcommand{\ignore}[1]{}
\newcommand{\dnote}[1]{}
\newcommand{\lnote}[1]{}
\newcommand{\eps}{\varepsilon}
\renewcommand{\epsilon}{\varepsilon}
\title{Quantum Locally Testable Codes}
\begin{document}

\author{Dorit Aharonov\thanks{School of Computer Science and
Engineering, The Hebrew University,
Jerusalem, Israel}
 \and Lior Eldar\thanks{School of Computer Science and
Engineering, The Hebrew University,
Jerusalem, Israel.}}

\date{\today}

\maketitle


\begin{abstract}
We initiate the study of quantum Locally Testable
Codes ($\qLTC$s). We provide a definition
together with a simplification, denoted $\sLTC$s, for the special case of
stabilizer codes,
and provide some basic results using those definitions. 
The most crucial parameter of such codes is their {\it soundness}, 
$R(\delta)$, namely, the probability that a randomly chosen 
constraint is violated as a function of the distance of a word from the code 
($\delta$, the relative distance from the code, is called the {\it proximity}). 
We then proceed to study limitations on 
$\qLTC$s. In our first main result we prove a 
surprising, inherently quantum, property of $\sLTC$s: 
for small values of proximity, 
the better the small-set expansion of the
interaction graph of the constraints, 
the {\it less} sound the $\qLTC$ becomes. 
This stands in sharp contrast to the classical setting. 
The complementary, more intuitive, result also holds: 
an upper bound on the soundness when the 
code is defined on {\it bad} 
small-set expanders (a bound which turns out to be far more difficult to 
show in the quantum case). 
Together we arrive at a quantum upper-bound 
on the soundness of stabilizer $\qLTC$s set on 
{\it any} graph, which does not hold in the classical case.    
Many open questions are raised regarding what possible parameters 
are achievable for $\qLTC$s.  
In the appendix we also define a quantum analogue 
of $\PCP$s of proximity ($\PCPP$s) and point out that the 
result of \cite{BGHSV} by which $\PCPP$s imply 
$\LTC$s with related parameters, carries over to the $\sLTC$s. 
This creates a first link between $\qLTC$s and quantum $\PCP$s \cite{AAV}.  
\end{abstract}

\section{Introduction} 
Quantum error correcting codes have played a crucial role in
quantum complexity theory 
(see, e.g., \cite{Got2,Got3,BC, BCG})
and their study is a vastly growing field
(see, e.g.,\cite{Got,Zemor2,Zemor3, Kovalev1,
Kovalev2,TerhalTradeoffs,Fetaya}); they are related to 
a variety of issues including resilience to noise
and fault tolerance, quantum cryptography, topological order,    
multi-particle entanglement, and more. 

Here, we initiate the study of the quantum analogue of Locally Testable 
Codes ($\LTC$s). $\LTC$s, first defined in \cite{FS,RS,A}, 
are a particularly interesting class of error correcting codes which 
played an instrumental role in all proofs of the celebrated
$\PCP$ theorem \cite{ALMSS, AS, Din}; their study had inspired the 
definition of property testing \cite{GGR} 
and the understanding of their limitations and possible constructions 
has developed into a very interesting field of its own 
(see for example Goldreich's survey \cite{Gold}).  
 
To define $\LTC$s, consider the following question: 
given a code of $n$-bit strings, 
defined by $O(1)$-local constraints, 
and a word which is of distance $\delta n>0$ from 
the code (we say it has {\it proximity} $\delta$), 
what is the probability that a randomly chosen
constraint is violated? 
We denote by $R(\delta)$ (called the {\it soundness})
the lower bound on the probability 
that {\it any} word of proximity $\delta$ from 
the code will violate a randomly 
chosen constraint. 

$\LTC$s of excellent soundness at proximities larger than some 
constant are known, most notably the Reed-Muller code \cite{Mac}, 
the Hadamard code \cite{AB}, and Hastad's 
long-code \cite{Hastad} which
were used in the $\PCP$ proofs of \cite{ALMSS, AS,Din}.
Though of excellent soundness, these codes are not so satisfying when 
considering other parameters of interest. For example, the rates of 
the Hadamard 
and long code are exponentially and doubly exponentially small, 
respectively. Much research \cite{FS,RS} was devoted to optimizing 
the parameters of $\LTC$s, maintaining constant relative distance 
and constant {\it query complexity}
(namely, the number of bits in each constraint), and improving the rate.   
The best known $\LTC$s in this respect are \cite{Din,BS} which
have constant distance, constant query complexity, 
and rates which are $1/polylog$.  
It is a major open question (called the $c^3$ problem \cite{GS}) 
whether good (namely, constant relative rate and distance)
$\LTC$s exist. 

\subsection{Quantum Locally Testable Codes - Definition and Motivation} 
To the best of our knowledge the quantum analogue of 
$\LTC$s was not defined before. 
We provide a definition of
general quantum Locally Testable Codes ($\qLTC$s) 
in Definition \ref{def:qLTC}. 
To define $\qLTC$s, we recall that a 
quantum code defined by $O(1)$-local constraints can be viewed 
as the groundspace (namely, the zero eigenspace) 
of a local Hamiltonian $H=\sum_{i=1}^m \Pi_i$ 
whose local terms are 
projections, which we will refer to as the quantum constraints.  
We define Quantum Locally Testable Codes ($\qLTC$s) with soundness 
$R(\delta)$ as 
those codes for which 
when a state $\Psi$ is within distance 
at least $\delta n$ from the code 
space, its average energy with respect to the constraints, 
$\frac{1}{m}\langle \psi|H|\psi\rangle$, 
is at least $R(\delta)$ 
(for an exact definition see Subsection \ref{sec:qLTC}). 
The average energy is the natural and commonly used 
analogue, in quantum Hamiltonian complexity, 
of the probability to detect a violation in a randomly chosen 
constraint (see for example \cite{AAV}). 

This definition sets the stage for a wide range of 
interesting questions.
What are the limitations on quantum $\LTC$s, and what are possible 
constructions?  
Are there $\qLTC$s which achieve, or get close to, 
the best classical $\LTC$s in terms of parameters, 
or are the quantum versions of those codes inherently limited by some quantum 
phenomenon? What can we learn from $\qLTC$s regarding the notion 
of local testability of proofs, a notion which in the classical setting 
is tightly related to that of $\LTC$s \cite{Gold}, 
and which is still widely 
evasive in the quantum setting \cite{AAV}?  

Our motivation in introducing $\qLTC$s 
in order to study the above questions 
stems not only from trying to import the interesting 
classical local-testability paradigm into the quantum setting, but 
also from their strong relations to 
questions which are of inherent 
interest to quantum information, quantum complexity as well as to 
quantum physics. We highlight here several such connections. 

An important motivation is to gain insight into the widely open quantum 
$\PCP$ conjecture \cite{Aha}, 
a quantum analogue of the $\PCP$ theorem; it states, roughly,  
that it is quantum-$\NP$ hard 
to approximate the ground energies
of local Hamiltonians even to within 
a constant fraction. This conjecture is tightly related to 
deep questions about 
multiparticle entanglement, and 
there has been much recent work attempting to make progress on it 
(see the recent survey \cite{AAV} and references therein). 
In the classical setting, 
$\LTC$s have been instrumental in $\PCP$ theory \cite{ALMSS, AS, Din}  
and are intimately related to the 
notion of local testability of proofs 
\cite{Gold}, and  
understanding the 
limitations of their quantum counterparts
might shed light on the $\qPCP$ problem. 

Another important open question is that of 
the feasibility of quantum self correcting memory. This is   
a medium in which 
a quantum state is maintained almost in tact 
for a long time without active error correction, 
even at constant temperatures; 
errors are corrected passively by the interaction with the environment.  
Clearly such a system is of high practical as well as theoretical interest,  
and the topic has been studied extensively in recent years 
(e.g., \cite{BT,Den,Ha,HaahPreskill,CL, Haah, Yoshida}). 
It is of major interest to devise feasible constructions of 
quantum self correcting memory.
A crucial role in this area is played by 
the {\it energy barrier} of the quantum code, which
is the amount of energy required in order to 
move from one codeword to an orthogonal one. 
This notion, which has also been studied extensively (see, e.g., 
\cite{M}), is tightly related to the soundness of the code, 
which can be viewed as the {\it energy cost} of 
large errors; understanding $\qLTC$s might thus provide insights 
into possible constructions of self-correcting memories.  

A fundamental open question related to both of the above is
whether multiparticle entanglement can be made robust 
at room temperatures. The question was formalized by Hastings in terms of  
the $\NLTS$ (No Low-energy Trivial States)
conjecture \cite{Has}, which, roughly, states 
that there exist local Hamiltonians 
such that all their low-energy states are highly entangled. 
Such Hamiltonians are necessary for the $\qPCP$ conjecture to 
hold (\cite{Has}, and see also \cite{AAV}). 
$\NLTS$ Hamiltonians and $\qLTC$s seem related: 
while in $\qLTC$s, low energies imply 
closeness to the code, in $\NLTS$ Hamiltonians they imply high entanglement, 
which is well known to be necessary for code states. 
Indeed, some weak connections between the two notions were already proven
\footnote{One can show that $\qLTC$s do not have tensor-product states with small (constant) mean energy.}.

In the following, we will investigate the behavior of $\qLTC$s in 
various scenarios.  
The behavior of $\LTC$s is usually explored in one of two contexts:
as an error-correcting code, or in relation to locally testable 
proofs (see \cite{Gold}); depending on the context, one is interested 
in different parameters. In particular, in the context of error correction, 
the interesting regime of proximities, namely distance of the word from the 
code, is at most half the distance {\it of} the code; in this regime, 
the error can still be corrected. In the context of $\PCP$s, on the other 
hand, much larger distances can be of interest, since a cheating prover may  
provide witnesses of arbitrary distance from the code. 
At any given point, we will mention the range which we will be 
considering. 

\subsection{Contributions}
\subsubsection{Definition and Basic Examples}
We provide a general definition of $\qLTC$s in Definition \ref{def:qLTC}. 
Being probably the richest and most well-studied class of quantum 
codes, stabilizer codes \cite{Got} are compelling to work with.  
We thus provide a simpler definition for stabilizer $\LTC$s (denoted 
$\sLTC$ -- Definition \ref{def:sLTC}) and  
prove that it coincides with the definition of $\qLTC$s on stabilizer codes, 
in Claim \ref{cl:sLTCqLTC}.

An illuminating example to consider is Kitaev's $2D$
toric code \cite{Kit2}, which turns out to have very bad soundness, 
since a string-like error of any length -- e.g.,  
an error made of Pauli operators applied on a $\Theta(\sqrt{n})$ long 
line-segment of qubits --  
only violates two constraints - those that intersect its two edges. 
So, at small (up to $1/\sqrt{n}$) values of proximity, 
the soundness is bounded from above by $1/\sqrt{n}$.  
One can in fact extend this phenomenon to derive bounds on the soundness 
for constant values of proximity. 

Another illuminating example is the Quantum Reed-Muller codes 
\cite{Ste}. Certain classical Reed-Muller code are known 
to have good (constant) soundness \cite{Alon}. 
Quantum Reed-Muller codes can be constructed 
using classical Reed-Muller codes and their dual, 
in the usual $\CSS$ paradigm \cite{Nielsen}. 
By construction, the resulting code will  
inherit its soundness from one of the two classical codes that defines the 
$\CSS$ code -- the one with the worse soundness.   
Unfortunately, the rate and distance of the quantum Reed-Muller codes  
are much worse even than the optimal classical Reed-Muller codes,  
as is expected from $\CSS$ codes \cite{Nielsen}
(more details will be provided in the journal version).  

\subsubsection{Bound on the soundness of $\sLTC$s on small set expanders} 
We provide two upper bounds 
on the soundness of $\qLTC$s at low, constant
values of proximities $\delta>0$. 
We focus on $\sLTC$'s on $n$ qudits, 
which are {\it good} quantum codes,
defined by $m$ $k=O(1)$-local check terms, where
each qudit participates in $D_L=O(1)$ constraints.  
For such codes, we consider bounds on the soundness at values of 
proximities which are at most some constant; 
this constant is a function of $k,D_L$, and 
in particular, $\delta<1/k$. 
Usually, in the classical setting, it is much easier to derive $\LTC$s 
whose soundness is good (large) for those {\it small} proximity values.  
Here, we show that in this supposedly {\it easier} range of 
parameters, $\qLTC$s are severely limited compared to their classical 
counterparts.  

To make the statement of the results simpler, we observe that 
the soundness $R(\delta)$, is bounded above by 
the number of constraints that touch the erred qudits, divided by $m$:   
hence it is at most $\delta n D_L/m=k\delta$ 
(using $D_L n =km$). It is more informative 
to present our results in terms of the {\it relative soundness} 
$r(\delta)= R(\delta)/k\delta$, which is the soundness normalized 
by its maximal value
(for exact definition see Definition \ref{def:smallr}).

Our first main result proves that good $\qLTC$s exhibit a severe 
limitation on their relative soundness, when set on good expanders. 
More precisely, consider the bi-partite graph of the code 
defined with $n$ bits on the left side,  
$m$ constraints on the other side, and edges connecting each 
constraint to all of its bits.
We say that the bi-partite graph is an $\epsilon$ small-set expander 
if every small (size $k=O(1)$) subset of bits, is examined nearly by as many 
constraints as it possibly can, namely, by at least $(1-\epsilon)kD_L$ 
constraints. 
Theorem \ref{thm:QECC} shows that in the quantum setting, 
when the underlying bi-partite graph of the $\sLTC$ code 
is an $\epsilon$ small set expander, the relative  
soundness is $O(\epsilon)$. In other words, the better the expansion, 
the worse the soundness. This holds  
for all proximities smaller than some constant $\delta_0$.  
More formally, we show:
\begin{theorem}\label{thm:QECC}  
Let $C$ be a good stabilizer code, on $n$ $d$-dimensional qudits, of
relative distance $>0$, and a $k$-local 
generating set ${\cal G}\subset \Pi_d^n$,
such that each qudit is examined by $D_L$ generators.
Put $\delta_0 = \min \left\{\frac{1}{k^3 \cdot D_L},\frac{1}{2n}dist(C)\right\}$.
Suppose the bi-partite interaction graph of ${\cal G}$ is $\eps$-small 
set expanding, 
for $\eps < 1/2$. 
Then,  for all $0<\delta < \delta_0$, 
we have $r(\delta)\le 2\eps$.
\end{theorem}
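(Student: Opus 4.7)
The plan is to construct, for each $0 < \delta < \delta_0$, an explicit Pauli error $E$ that violates at most $2\epsilon\delta n D_L$ stabilizer generators and has proximity at least $\delta$. Since $m = D_L n/k$, this directly gives $R(\delta)\le 2\epsilon k\delta$ and hence $r(\delta)\le 2\epsilon$.

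The building block is a single-qudit ``fragment-of-a-generator'' error whose low detection count is purely quantum. Fix a generator $G$ with support $S(G)$ of size $k$ and a qudit $q\in S(G)$, and let $E_{G,q}$ be the single-qudit Pauli $G|_q$ acting on $q$. Trivially $E_{G,q}$ commutes with $G$; moreover, for every other generator $G'$ whose support meets $S(G)$ only at $q$, $E_{G,q}$ also commutes with $G'$ -- because $[G,G']=I$ forces the qudit-wise anti-commutation contributions to cancel modulo $d$, and with only one overlap qudit $q$ the contribution there must vanish locally, i.e.\ $G|_q$ and $G'|_q$ commute, which is exactly $[E_{G,q},G']=I$. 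So $E_{G,q}$ is detected only by ``multi-touch'' generators sharing at least two qudits with $S(G)$. Small-set expansion bounds these: writing $N(S(G))$ for the set of generators touching $S(G)$, we have $|N(S(G))|\ge(1-\epsilon)kD_L$ while total incidence is $kD_L$, so $\sum_{G'\in N(S(G))}(|S(G')\cap S(G)|-1)\le \epsilon kD_L$, meaning the multi-touch generators contribute total incidence at most $2\epsilon kD_L$. Averaging over uniform $q\in S(G)$ produces some $q$ for which $E_{G,q}$ is violated by at most $2\epsilon D_L$ generators. To amplify to proximity $\delta$, select $t=\delta n$ source generators $G_1,\dots,G_t$ with pairwise disjoint distance-$O(1)$ Tanner-graph neighborhoods; the assumption $\delta<1/(k^3 D_L)$ leaves enough room for a greedy selection. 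Pick $q_i\in S(G_i)$ by the averaging argument and set $E=\prod_i E_{G_i,q_i}$. The $q_i$ are distinct, so $|E|=\delta n$, and a generator anti-commuting with $E$ must anti-commute with at least one $E_{G_i,q_i}$; a union bound yields at most $2\epsilon D_L\cdot\delta n$ violated generators, as required.

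The main technical obstacle is showing $E$ has proximity at least $\delta$, i.e.\ $|E\cdot s|\ge\delta n$ for every stabilizer element $s$. Elements of the normalizer outside the stabilizer are easy: $\delta<\mathrm{dist}(C)/(2n)$ and $|e|\ge\mathrm{dist}(C)$ give $|Ee|\ge\mathrm{dist}(C)-\delta n\ge\delta n$ by the triangle inequality. True stabilizers are delicate: one must rule out substantial weight cancellation between $E$ and any product of generators. The plan is to leverage the disjointness of the chosen neighborhoods -- any $s$ cancelling $E$ at many of the $q_i$ must, in its generator decomposition, draw from many pairwise disjoint Tanner-graph neighborhoods, forcing $|s|$ to grow linearly with the number of cancellations, so that $|Es|\ge\delta n$ via the triangle inequality or a more refined local analysis. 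Simultaneously coupling this rigidity with the greedy choice of the $G_i$ and the averaging choice of the $q_i$ is, I expect, the most demanding step of the argument.
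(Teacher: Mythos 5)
Your construction is essentially the paper's: the error is a tensor product of single-qudit restrictions $u|_{q(u)}$ of generators $u$ drawn from a $1$-independent set of size $\delta n$; the qudit $q(u)$ is chosen by the same averaging/expansion count (at most $2\eps D_L$ multi-touch generators per chosen site, which is exactly Fact \ref{fact:deg}); the observation that two commuting generators overlapping in a single qudit must commute locally there is the same key quantum ingredient; and the final union bound and normalization $D_L n=km$ are identical. The split of the proximity argument into centralizer-minus-stabilizer elements (killed by $\delta<dist(C)/2n$ and the triangle inequality) versus genuine stabilizer elements also matches the paper.

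The one genuine gap is the step you yourself flag as unfinished: showing that no $\Delta\in A({\cal G})$ reduces the weight of $E$. Your sketched mechanism --- ``$s$ cancelling many $q_i$ forces $|s|$ to grow linearly, so $|Es|\ge\delta n$ by the triangle inequality'' --- does not close as stated: even if $s$ gains one unit of weight per cancelled site, the global triangle inequality only gives $|Es|\ge c-\delta n$, which is $0$ when all $c=\delta n$ sites are cancelled. The correct argument is local and per-neighborhood, and is where Fact \ref{fact:nocommute} enters. Because the $q_i$ come from a $1$-independent set, the qudit-neighborhoods ${\cal N}(q_i)$ are pairwise disjoint, so it suffices to show $wt\bigl((\Delta E)|_{{\cal N}(q_i)}\bigr)\ge 1$ for every $i$; summing over the disjoint neighborhoods then yields $wt(\Delta E)\ge\delta n$ with no loss whatsoever. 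And indeed, if $(\Delta E)|_{{\cal N}(q_i)}=I$, then $\Delta$ restricted to ${\cal N}(q_i)$ equals $E|_{q_i}^{-1}$ at $q_i$ and the identity elsewhere in that neighborhood; by Fact \ref{fact:nocommute} there is a generator $h$ acting on $q_i$, whose entire support lies inside ${\cal N}(q_i)$, with $[h|_{q_i},E|_{q_i}]\neq 0$, hence $[\Delta,h]=[\Delta|_{q_i},h|_{q_i}]\neq 0$, contradicting $\Delta\in A({\cal G})$. With this local step supplied in place of your global triangle-inequality plan, your proof coincides with the paper's.
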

See subsection \ref{sec:qLTC} for exact definitions of 
Stabilizer codes and their generators, and Definition 
\ref{def:smallr} for the exact 
definition of relative soundness. 
   
Theorem \ref{thm:QECC} stands in sharp contrast to    
the classical domain. Classically, 
codes can easily be constructed on good expanders 
so that for small proximities their soundness is excellent;  
We provide an explicit  
such example whose relative soundness is arbitrarily close to 
$1$ by plugging the {\it lossless expanders} constructed in \cite{CRVW}, into 
the expander code construction of Sipser and Spielman \cite{Spi}. This 
implies good classical codes with 
constant query complexity and with almost optimal 
soundness for any proximity $\delta$ smaller than some constant  
(see Claim (\ref{cl:classical} in Appendix \ref{app:classical}). 

\subsubsection{Bound on the soundness of general $\qLTC$s} 
Our second main result is an upper bound on the relative soundness which 
holds for $\sLTC$s set on {\it any} underlying bi-partite graph, 
not necessarily small-set expanders.  

\begin{theorem}\label{thm:sound} (Roughly)
For any good stabilizer code $C$
of $k$-local terms ($k\geq 4$) over $d$-dimensional qudits, where
each qudit interacts with $O(1)$ 
local terms,
errors of fractional weight $\delta<\delta_0\le 1$, for 
$\delta_0 = \Omega(1)$
have relative soundness at most 
 $\alpha(d)(1-\gamma_{gap})$ 
for some constant function $\gamma_{gap}=\gamma_{gap}(k,d)>0$. 
\end{theorem}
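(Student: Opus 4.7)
The plan is to prove Theorem \ref{thm:sound} by combining a universal probabilistic construction with a dichotomy on the small-set expansion of the bipartite interaction graph of $\mathcal{G}$. Fix $\delta<\delta_0$ and a threshold $\eps^{\ast}\in(0,1/2)$ to be optimized. If the interaction graph is $\eps^{\ast}$-small-set expanding, Theorem \ref{thm:QECC} directly gives $r(\delta)\le 2\eps^{\ast}$, which for $\eps^{\ast}$ chosen small enough is already bounded by $\alpha(d)(1-\gamma_{gap})$. Otherwise, there exists a set $S$ of at most $k$ qudits examined by at most $(1-\eps^{\ast})|S|D_L$ generators, which we will exploit as the seed of a low-violation error.

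We construct a random Pauli error $E=\bigotimes_{q\in T}E_q$ supported on a uniformly random subset $T\subset[n]$ with $|T|=\delta n$ (in the non-expanding case, condition on $T\supseteq S$) and uniformly random non-identity Paulis $E_q$ on each qudit of $T$. The key quantum ingredient is that in the $d$-dimensional Pauli group, a uniformly random non-identity Pauli anticommutes with any fixed non-identity Pauli with probability exactly $\alpha(d)=d/(d+1)<1$. For each generator $C_i$ whose support meets $T$ in $s_i\ge 1$ qudits, the probability that $C_i$ anticommutes with $E$ equals $\tfrac{1}{2}\bigl(1-(1-2\alpha(d))^{s_i}\bigr)$, which is at most $\alpha(d)$ for $s_i=1$ and strictly smaller (approaching $1/2$) for $s_i\ge 2$. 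Summing over the at most $k\delta m$ incidences between $T$ and the generators, and adding the non-expansion savings from $S$ in the second case, one obtains that the expected number of violated generators is at most $\alpha(d)(1-\gamma_{gap})\cdot k\delta m$, where $\gamma_{gap}=\gamma_{gap}(k,d)>0$ combines the contribution from generators with $s_i\ge 2$ (non-negligible since $k\ge 4$) and the expansion deficit $\eps^{\ast}$. The probabilistic method then yields a deterministic $E$ realizing this bound.

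Finally, we certify that $E$ has proximity $\ge\delta$: because $\delta_0\le\tfrac{1}{2n}\mathrm{dist}(C)$ and $C$ is a good code, no Pauli of weight $\delta n$ is equivalent modulo the stabilizer to a Pauli of strictly smaller weight, so $E$ lies at Pauli-distance $\delta n$ from the normalizer and the state $E\ket{\psi}$ has proximity exactly $\delta$ from $C$. The main obstacle is obtaining a $\delta$-independent gap $\gamma_{gap}=\Omega(1)$: the straightforward contribution from random overlaps of size $s_i\ge 2$ is only of order $k^2\delta^2$ per touched generator and degenerates as $\delta\to 0$, so the expansion-based dichotomy is essential in the small-$\delta$ regime to compensate. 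An additional subtlety is handling generators that meet both $S$ and the random padding, where the commutator is a product of a deterministic and a random sign whose parity must be bounded uniformly. The quantum-specific fact $\alpha(d)<1$ has no classical analog---classically the adversary can always violate every touched constraint by choosing the bit-flip pattern---and is the ultimate source of the separation from the classical construction in Claim \ref{cl:classical}.
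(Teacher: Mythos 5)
Your high-level strategy (a dichotomy between expanding and non-expanding geometry, plus a random Pauli error analyzed in expectation) matches the paper's, but two of your steps fail as written. First, the dichotomy is set up on the wrong object. The negation of ``$\eps^{\ast}$-small-set expanding'' only hands you \emph{one} bad set $S$ of constant size $|S|\le k$; conditioning your random support $T$ on containing $S$ saves at most $O(1)$ constraint violations out of a total of order $\delta n D_L$, so the relative gap you extract from non-expansion vanishes as $n\to\infty$. Since (as you yourself note) the savings from random double-overlaps is only $O(k^2\delta^2)$ per touched generator and degenerates as $\delta\to 0$, neither source of savings survives, and no constant $\gamma_{gap}$ emerges. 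The paper avoids this by first extracting a $k$-independent family $U$ of constraints of size $\Theta(\delta n)$ (Fact \ref{fact:indset}) and running the dichotomy on the expansion of the \emph{linear-size} set $\Gamma(U)$: if it expands, Claim \ref{cl:QECC} applies; if not, the non-expansion is felt on linearly many disjoint ``islands,'' which is what makes the deficit a constant fraction of the total penalty (Lemma \ref{lem:indexp} and Fact \ref{fact:penalty}).

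Second, your proximity certification is incorrect. The hypothesis $\delta<\mathrm{dist}(C)/2n$ only guarantees that reduction modulo the centralizer $\mathbf{Z}({\cal G})$ coincides with reduction modulo the stabilizer group $A({\cal G})$; it does \emph{not} imply that a weight-$\delta n$ Pauli is irreducible modulo $A({\cal G})$ (a single generator $g$ has weight $k$ but weight $0$ modulo $A({\cal G})$). Showing that the constructed error retains weight $\approx\delta n$ modulo $A({\cal G})$ is precisely the hard, inherently quantum part of the argument, which the paper handles with the Onion fact (Fact \ref{fact:succinct}) -- errors occupying fewer than $k/2$ qudits of an island cannot be shrunk within its $k$-neighborhood -- combined with a concentration argument (Fact \ref{fact:weight}) showing almost all islands are below that threshold. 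Your proposal has no substitute for this step. (A smaller issue: the paper's $\alpha(d)$ is $1-1/(d^2-1)$, arising from a pigeonhole on the restrictions of generators to a qudit, not $d/(d+1)$; the two coincide only at $d=2$.)
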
 

\noindent
$\alpha(d)$ in the above theorem is defined to be $1-1/d^2$; 
this is a technical upper bound 
on the relative soundness 
of $\qLTC$s defined on $d$-dimensional qudits,
stemming quite easily from the size of the alphabet $d$
(see subsection \ref{sec:alphabet}); 
Theorem \ref{thm:sound} shows
that the soundness is further bounded by some seemingly deeper quantum 
phenomenon. 
We stress that this upperbound, which is not exhibited in classical 
codes, is found in the range of parameters of 
$\delta$ (small constants) in which it 
is supposed to be {\it easiest} to achieve 
soundness for $\LTC$s, e.g., our 
Claim \ref{cl:classical}.  

\subsubsection{Quantum $\PCP$s of Proximity}
$\LTC$s are tightly connected \cite{Gold} to 
$\PCP$'s of proximity ($\PCPP$s), which 
are proof systems defined very similarly to 
$\PCP$s (See \cite{BGHSV}). 
For the reader familiar with $\PCP$s, they too consider a verifier 
who gets access to an untrusted proof, however, $\PCPP$s differ from 
$\PCP$s in two important aspects: 
first, they are weaker, in the sense that they 
are required to reject only inputs that are {\it far} from the language, 
whereas in $\PCP$s any input out of the language should be rejected. 
On the other hand, the verifier is charged not only for the number of 
queries out of the proof, 
but also for the number queries out of (part of) the input. 
For a formal definition see Appendix \ref{app:PCPP}.  

Ben Sasson et. al \cite{BGHSV} provide a standard construction of 
an $\LTC$ from a $\PCPP$. Given a $\PCPP$ for membership in a code, 
and an error correcting code $C$, 
they construct an $\LTC$ code $C'$, which inherits its soundness parameter 
from the soundness parameter of the $\PCPP$ and its distance from the 
code $C$ (Construction 4.3, and Proposition 4.4
in \cite{BGHSV}, see Appendix \ref{app:PCPP}). 

In Appendix \ref{app:PCPP}, we suggest a definition of quantum $\PCPP$s, 
and show that a similar result to that of \cite{BGHSV} 
holds in the quantum setting. 
The meaning 
of the definition of $\qPCPP$ and of the above described connection, 
and their relevance and importance to the quantum $\PCP$ conjecture,  
are far from clear (see for example \cite{AAV} for doubts 
regarding the classical approach to proving the quantum $\PCP$ conjecture, 
and the direct applicability of quantum Error correcting codes in this 
context). Still we provide these definitions and results in the appendix, 
to make the point that a syntactic connection 
does carry over also in the quantum regime. It is a widely open question 
to give deep meaning to the connection between 
$\qLTC$s and quantum local testability 
of proofs, as is known in the classical case \cite{Gold}. 

\subsection{Overview of Proofs of Theorems \ref{thm:QECC} and 
\ref{thm:sound}}\label{subsec:overview}

\subsubsection{Bounds on $\sLTC$ codes on Expanders}
To prove theorem \ref{thm:QECC}, we want to use good small-set expansion 
in order to construct an error which will not have a large energy penalty (namely, will not violate too many constraints) 
but which will be of large weight. More precisely, the error should have  
a large weight modulo 
the centralizer of the stabilizer group (see Definition \ref{def:sound}), 
and yet should not
violate too many stabilizer generators 
(recall that an error violates a stabilizer 
generator, or constraint, if it does not commute with it;
see definition \ref{def:stab}).

The key idea is that in a small-set expander, 
intersections between stabilizer generators which consist of more than 
one qudit are rare (See fact \ref{fact:deg}).   
The size of the intersection matters since for 
two generators that intersect on 
a single qubit, the restrictions of those operators to that qubit
must {\it commute}, because the two generators commute overall
(see definition \ref{def:stab}).  
We note that it cannot be that {\it all} generators when restricted 
to a given qudit commute, 
because this would mean 
this qubit is trivial for the code (see remark at the end of 
Subsection \ref{sec:QECCdefs}). 
An error defined on a qudit in such a way that it 
commutes with the majority of the 
generators acting on it, will violate only a small  
fraction of the 
constraints acting on that qudit. 

To extend this to errors of larger weight (up to some small 
constant fraction), we apply the above idea to each of 
the generators in  
a large ``sparse'' set of generators, namely 
a set in which each two terms are of at least 
some constant distance apart in the interaction bi-partite graph.  
(formally, a $1$-independent set of terms;  
see Definition \ref{def:Lind}).  
It is not difficult to see that due to the distance between the
generators, the error weight remains large 
even modulo the centralizer. 

\subsubsection{Upper bound on soundness for stabilizer $\sLTC$s on any 
graph}  
To prove theorem \ref{thm:sound}, we want to prove that 
regardless of the  graph they are set on, the  
relative soundness of $\qLTC$s is bounded from above by some constant 
strictly smaller than $1$. 
We use the bound of theorem (\ref{thm:QECC}) 
(the "surprising" side) augmented 
with a claim that quantum stabilizer codes not only 
suffer from the
quantum effect of Theorem (\ref{thm:QECC}) 
but also cannot avoid the classical  
effect by which codes with {\it poor} small-set expansion have low soundness,
namely that large error patterns are examined by relatively few
check terms, so the number of constraints they violate is relatively low.
Together, this means that for {\it any} underlying graph, 
whether a good or a bad small set expander, 
the relative soundness is non-trivially bounded. 

While in the classical case, the fact that poor expansion implies poor 
relative soundness, is very easy to argue,  
in the quantum case the proof turns out to be 
quite non-trivial, but still a similar phenomenon holds. 
Let us clarify what we're trying to show.
We want to show that if the expansion is bad, one can construct 
an error of large weight but which does not 
have large relative penalty.
Suppose we would like to show that the soundness function $r(\delta)$ 
is small, for some range of proximity values $(0,\delta_0]$. 
Consider a set of qudits $S$ whose fractional size is some $\delta \in (0,\delta_0]$, and which has positive expansion error 
$\eps>0$. A priori, 
if we have an error supported on $S$, then the maximal number of violations 
is at most $|S| D_L (1-\eps)$, 
by the assumption on the expansion.
This might seem as though it proves the result trivially.
The technical problem here, however, is that 
an error on $S$ may just "seem" to be large, whereas possibly, may be 
represented much more succinctly modulo the centralizer group. 
This problem is, once again, inherently quantum - it corresponds, essentially, 
to showing that a given error has large weight even modulo to 
the {\it dual} code, 
namely the code spanned by the generators themselves.
We would hence like to devise an error pattern, that 
cannot be downsized significantly by operations in the centralizer group, 
but would still "sense"
the non-expanding nature of $S$, and hence have fewer-than-optimal violations.

To this end we prove the Onion fact (Fact \ref{fact:succinct}) which 
might be of interest of its own. It  
states that given an error on at most $k/2$ 
of the $k$ qudits supporting  
a generator, its weight cannot be reduced modulo the centralizer 
within the $k$-neighborhood 
of the
generator (the $k$ neighborhood is, roughly,  
the qudits belonging to the set of 
terms of distance $k$ from that generator 
in the interaction graph). 
The ``Onion'' in the name is due to the 
fact that the proof (given in Subsection \ref{sec:onion})
works via some hybrid argument on
the onion-like layers $\Gamma^{(i)}(u)$ 
surrounding the qudits of a generator $u$. 

Our idea is to concentrate the error on 
a large set of far away generators whose $k$-neighborhoods 
are non-intersecting (we call those generators ``islands''). 
We now argue as follows. If we draw a random error on the qudits belonging to 
these "islands", with probability calibrated so
that the expected number of errors per "island", 
is, say, $1$ error, the following will occur: 
on one hand, many islands have more than one error, so they ``sense'' 
the sub-optimality of expansion.  
On the other hand, only a meager fraction, exponentially small in $k$, 
of the "islands" with at least two errors, will have more than $k/2$ errors; 
only those, by the Onion fact (fact \ref{fact:succinct}) 
can be potentially reduced modulo the centralizer. 
Hence with high probability,
the weight of the random error, cannot be significantly reduced
modulo the centralizer, yet it still has 
less-than-optimal number of violations due to the expansion. 

\subsection{Related work}
Theorem \ref{thm:QECC} 
is related to our recent result \cite{CLH}
in which it was shown 
that when a quantum local Hamiltonian, whose terms mutually 
commute, is set on 
a good small-set expander, then the approximation of its ground energy 
lies in $\NP$. In that result, 
the better the small-set expansion, the better the 
approximation. In other words, as the expansion improves, the problem becomes 
less interesting from the 
quantum point of view. Another result of the same spirit
was derived by Brandao and Harrow \cite{BH} for non-commuting $2$-local 
Hamiltonians on standard expanders.   
In both results good expansion poses a limitation on the  
expressiveness of quantum constraint systems. 
We note that the starting point of both the proof of our
Theorem \ref{thm:QECC} and the result of \cite{CLH} are 
Facts \ref{fact:essence} and \ref{fact:deg} regarding the percentage 
of unique neighbors in good small set expanders; however, 
the proofs proceed from that point onwards in very different directions. 

Dinur and Kaufman \cite{Din3} 
showed that classical $\LTC$ codes {\it must} be set on 
a good small-set expander. 
More precisely, given a code with soundness 
$R(\delta) = \rho \cdot \delta$ for all 
$\delta>\delta_0$ for some constant $\delta_0$, 
the edge expansion of the underlying graph 
is at least $c\rho$, for some constant $c$.
This might seem to provide another classical contrast to our Theorem 
\ref{thm:QECC}, in addition to our Claim \ref{cl:classical}.  
However, 
\cite{Din3} does not use bi-partite graph expansion but rather the 
graph in which an edge connects any two nodes that participate in a common 
constraint; the two notions of expansion are very different 
and hence direct comparison to the \cite{Din3} result 
is not possible.   

\subsection{Discussion and Further directions}\label{sec:discussion}
Many open questions arise regarding $\qLTC$s. 
Can we find other $\qLTC$s with much better parameters than 
those mentioned in this article?  
It is a natural starting point to check known quantum codes that have good 
self-correcting properties, or high 
energy barrier \cite{Haah,M}.
Do $\qLTC$s
exist with parameters which are as good as those of \cite{Din, BS}, 
namely, constant distance, constant query complexity, constant soundness for 
all proximities larger than some constant $\delta_0>0$, 
and rate which is inverse 
polylogarithmic? 
If not, can we prove appropriate upper bounds on $\qLTC$s?  

The upper bounds we provided here 
point to an inherently quantum phenomenon, 
which constitutes an obstacle against local testability for 
$\qLTC$s in the low-proximity range of parameters. 
Both of our main theorems, reflect, in fact, a deeper phenomenon called {\it monogamy of entanglement}
which was identified also in \cite{CLH} for commuting local Hamiltonians, and \cite{BH} for $2$-local general Hamiltonians.
Essentially, this phenomenon limits the amount of entanglement that a single qudit with $O(1)$ quantum levels can "handle".
In quantum codes, based on commuting check terms, the entanglement of code states arises 
from the fact that the operators actually do not commute per qubit, but only 
over sets of qubits.
Incidentally, per-qubit non-commutativity is also the phenomenon 
responsible for the energy "penalty" received by certain (sparse) errors.
Hence, in cases where monogamy of entanglement is a significant factor, for example in small-set expander geometry,
we witness an inherent decline in the energy "penalty" of such errors, thus upper-bounding the quantum local testability.
It is thus the combination of {\it monogamy of entanglement} in small-set expanders, and the poor local testability of non-expanders,
that are responsible for the apparently quantum phenomenon.
Whether Theorem \ref{thm:sound} 
hints at a more profound limitation on quantum
local testability, that holds also  
for larger values of $\delta$, 
calls for further research. 
Perhaps refuting the $c^3$ open problem is doable 
in the quantum case? 

Finally, the link between quantum local testability of proofs and 
$\qLTC$s, so crucial in the classical world \cite{Gold},
is far from clear in the quantum setting. 
We have merely touched upon it (see the 
result of quantum $\PCPP$s in the 
appendix), however, much further clarification of this connection, 
is called for. 

{~}

\noindent
\textbf{Organization of paper} 
In Section \ref{sec:bg} we provide the necessary background 
on quantum error correcting codes and on small-set expanders. 
Section \ref{sec:qLTC} provides definitions of quantum locally 
testable codes ($\qLTC$s) and stabilizer $\qLTC$s , and basic results. 
Section \ref{sec:QECC} provides bounds on the soundness of 
quantum $\LTC$s on small-set expanders, and 
Section \ref{sec:sound} provides an absolute bound on soundness 
of stabilizer $\LTC$s regardless of the expansion of their 
underlying graph. 
Finally, In the Appendices we 
provide several proofs which are on the more technical side. 
In Appendix \ref{app:PCPP} we provide  
our definition of quantum $\PCPP$s and 
the construction and proof of the induced $\qLTC$. 

\section{Background}\label{sec:bg}

\subsection{The Pauli groups}

\begin{definition}

\textbf{Pauli Group}

\noindent
The group $\Pi^n$ is the $n$-fold tensor product of Pauli operators $A_1\otimes A_2 \otimes \hdots \otimes A_n$, where $A_i\in \left\{I,X,Y,Z\right\}$,
along with multiplicative factors $\pm 1, \pm i$ with matrix multiplication as group operation.
\end{definition}
The Pauli group can be generalized to particles of any dimensionality
$d$: 
\begin{definition}\label{def:generalpauli}
\textbf{The Pauli group generalized to $F_d$ }

\noindent
Let $X^k_d:|i\rangle \mapsto |(i+k) \pmod d\rangle , 
P_d^{\ell}|j\rangle \mapsto w_d^{j\ell}|j\rangle$ 
be the generalized bit and phase flip operators on the 
$d$-dimensional Hilbert space, where $w_d=e^{2\pi i/d}$ 
is the primitive $d$-th root of unity.
Let $\Pi_d$ be the group generated by these operators and all roots of 
unity of order $d$. 
The group $\Pi_d^n$ is the $n$-fold tensor product of Pauli operators $A_1\otimes A_2 \otimes \hdots \otimes A_n$, where $A_i\in \left\{X_d^kP_d^\ell\right\}$
along with these multiplicative factors.  
\end{definition} 
The weight of a Pauli operator is defined to be the 
number of locations where it is non-identity. 

\subsection{General Quantum Error Correction}

\begin{definition} 

\textbf{Quantum Code}\label{def:code}

\noindent A quantum code on $n$ qudits is given by 
a set of ($m$) projections $\Pi_i$. The code is defined to be 
the simultaneous $0$ eigenstates of all those projections. 
\end{definition} 

\begin{definition}

\textbf{Quantum Error detection 1}\label{def:det1}\cite{Knill}

\noindent
Let $C\subseteq {\cal H}$ be a quantum code on $n$ qudits. 
Let $\Pi_C$ be the orthogonal projection onto $C$.
We say that the set of errors ${\cal E}$ 
is detectable by $C$ if for any $E\in {\cal E}$, we have:
\begin{equation}\label{eq:genqecc}
\Pi_C E \Pi_C = \gamma_{E} \Pi_C,
\end{equation}
where $\gamma_{E}$ is some constant which may depend on $E$. 
\end{definition}

\begin{definition}

\textbf{Quantum Error detection 2}\label{def:det2}\cite{Knill}

\noindent
A set ${\cal E}$ is detectable by $C$, if 
for any $\ket{\psi},\ket{\phi}\in C$ with 
$\langle \psi | \phi \rangle=0$, and any $E\in {\cal E}$,
$\bra{\psi} E \ket{\phi}=0$.
\end{definition}

\begin{claim}\label{cl:defequiv}\cite{Knill}
Definitions (\ref{def:det2}) and (\ref{def:det1}) are equivalent:
\end{claim}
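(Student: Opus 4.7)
The plan is to prove the equivalence of Definitions \ref{def:det1} and \ref{def:det2} by showing each implies the other via direct computation with matrix elements in an orthonormal basis of $C$.

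For the easy direction, assume Definition \ref{def:det1} holds, so $\Pi_C E \Pi_C = \gamma_E \Pi_C$. Given any $\ket{\psi},\ket{\phi}\in C$ with $\langle \psi|\phi\rangle=0$, I would write $\bra{\psi}E\ket{\phi} = \bra{\psi}\Pi_C E \Pi_C \ket{\phi} = \gamma_E \langle \psi|\phi\rangle = 0$, which is Definition \ref{def:det2}. This direction is immediate.

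For the converse, assume Definition \ref{def:det2}. Fix an orthonormal basis $\{\ket{i}\}$ of $C$, and consider the matrix elements $M_{ij} = \bra{i}E\ket{j}$ of $\Pi_C E \Pi_C$ restricted to $C$. For $i \ne j$, Definition \ref{def:det2} applied to $\ket{\psi}=\ket{i}$, $\ket{\phi}=\ket{j}$ immediately gives $M_{ij}=0$, so the restriction is diagonal. To see that the diagonal entries $M_{ii}$ are all equal, I would apply Definition \ref{def:det2} to the two orthogonal superpositions $\ket{\psi_\pm} = (\ket{i}\pm\ket{j})/\sqrt{2}$, and (using the already-established vanishing of the off-diagonal terms) deduce $M_{ii}-M_{jj}=0$. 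Hence there is a single scalar $\gamma_E$ with $M_{ii}=\gamma_E$ for all $i$, giving $\Pi_C E \Pi_C = \gamma_E \Pi_C$.

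The argument is essentially routine linear algebra and there is no serious obstacle; the only subtlety is to check that Definition \ref{def:det2} really is strong enough to fix the common diagonal value, which is where the $(\ket{i}\pm\ket{j})/\sqrt{2}$ trick is used. A complex version (e.g., $(\ket{i}\pm i\ket{j})/\sqrt{2}$) can additionally pin down the imaginary parts of $M_{ij}$ if one prefers to extract the off-diagonal vanishing and the diagonal-equality simultaneously, but since the off-diagonal case is already immediate, this is not needed here.
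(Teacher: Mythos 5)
Your proof is correct and follows essentially the same route as the paper: the forward direction by sandwiching with $\Pi_C$, and the converse by showing the restriction of $E$ to $C$ is diagonal in an orthonormal basis and then using the orthogonal pair $(\ket{i}\pm\ket{j})/\sqrt{2}$ to force equal diagonal entries. No gaps.
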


The proof can be found in the Appendix. 
Definition (\ref{def:det2}) gives rise to the following
 natural definition:
\begin{definition}\label{def:qeccdist}

\textbf{Distance of a code}\cite{Knill}

\noindent
Let $C$ be a quantum code detecting error set ${\cal E}\subset \Pi_d^n$. 
$C$ has distance $dist(C)$ if for any two orthogonal code states $\ket{\phi},\ket{\psi}$, and any $E\in {\cal E}$ of weight at most $dist(C)-1$, we have $\langle \phi | E|\psi \rangle = 0$.
\end{definition}

\subsection{Stabilizer Quantum Error Correcting Codes}\label{sec:QECCdefs}

\begin{definition}\label{def:stab} 
\textbf{Stabilizer Code}

\noindent
A stabilizer code $C$ 
is defined by an Abelian subgroup $A=A({\cal G})\subset \Pi_d^n$, generated by a set ${\cal G} \subset \Pi_d^n$.
The codespace is 
defined as the mutual $1$-eigenspace of all elements in ${\cal G}$ 
(we require that $-I\notin {\cal G}$ so that this codespace is 
not empty). 
An element $E \in \Pi_d^n$ is said to be an error if 
it does not commute with at least one element of ${\cal G}$, 
i.e. $E \notin \mathbf{Z}({\cal G})$, where 
$\mathbf{\mathbf{Z}}({\cal G})$ is the centralizer of ${\cal G}$.
An element $E \in \Pi_d^n$ is said to be a logical operation, if 
it commutes with all of ${\cal G}$, but is not generated by ${\cal G}$, i.e., 
$E \in \mathbf{Z}({\cal G})-A.$
A stabilizer code is said to be $k$-local if each term $g\in {\cal G}$
is an element of $\Pi_d^n$, with weight exactly $k$.
\end{definition}
To fit with the terminology of Definition \ref{def:code}, consider for each 
generator $g$ the 
projection $\Pi_g$ which projects on the orthogonal subspace to the $1$ 
eigenspace of $g$.

\begin{definition}
\textbf{Succinct representation}

\noindent
A $k$-local set of generators ${\cal G}$ is said to be {\it succinct}, if there does not exist
a different generating set 
${\cal G}'$, such that $A({\cal G}) = A({\cal G}')$ and $wt(g)<k$ for some $g\in {\cal G}'$.
\end{definition}

\noindent
The following is a well known fact \cite{Got}
which will be useful later on, and we prove it in appendix 
(\ref{sec:lemPauli}).
\begin{lemma}\label{lem:Pauli}
\textbf{Stabilizer Decomposition}

\noindent
Let $C$ be a stabilizer code on $n$ qudits, and consider the sets 
$EC= \left\{E \ket{\phi} , \ket{\phi}\in C\right\}$ 
with $E\in \Pi_d^n$.
Then two sets $EC$, $E'C$ are either orthogonal or equal to each other, 
and $\left\{EC\right\}_{E\in \Pi_d^n}$ span the entire Hilbert space. 
Moreover, consider the partition of the entire Hilbert space to 
sets of states which are mutual eigenvectors of all generators of $C$ with 
exactly the same set of eigenvalues for each generator. 
Then this partition is exactly the partition derived by the $EC$'s, 
and two orthogonal $EC$'s have two lists of eigenvalues which differ 
on at least one generator. 
In particular, any $n$ qudit state $\ket{\psi}$ may be written as
a sum of orthogonal vectors $$ \ket{\psi} = \sum_i E_i \ket{\eta_i},$$
where $E_i\in \Pi_d^n$ and $\ket{\eta_i}\in C$.
\end{lemma}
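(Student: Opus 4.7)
The plan is to exhibit $\mathcal{H} = (\mathbb{C}^d)^{\otimes n}$ as an orthogonal direct sum of the joint eigenspaces of the generators $\mathcal{G} = \{g_1,\dots,g_r\}$, and then identify each such eigenspace with one of the sets $EC$. First, I would use the fact that any two elements of $\Pi_d^n$ commute up to a $d$-th root of unity: for each $E \in \Pi_d^n$ and each $g_j \in \mathcal{G}$ there is a unique $s_j(E) \in \mathbb{Z}_d$ with $g_j E = \omega_d^{-s_j(E)} E g_j$. Collecting these into a signature $s(E) = (s_1(E),\dots,s_r(E)) \in \mathbb{Z}_d^r$, a one-line computation shows that for every $\ket{\phi}\in C$ the state $E\ket{\phi}$ is a joint eigenvector of $\mathcal{G}$ with eigenvalues $\omega_d^{-s(E)}$; hence $EC$ is contained in the joint eigenspace $V_{s(E)}$.

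Next, I would argue that the map $s:\Pi_d^n \to \mathbb{Z}_d^r$ is a surjective group homomorphism. Surjectivity is the standard symplectic-duality statement for the qudit Pauli group: because the generators in $\mathcal{G}$ are independent (no $-I$ relation, by the hypothesis in Definition \ref{def:stab}), for each $j$ one can find a Pauli that has symplectic product $1$ with $g_j$ and $0$ with the other $g_i$'s, and products of these realize every signature. Combining surjectivity with $\dim C = d^{n-r}$, one obtains the orthogonal decomposition $\mathcal{H} = \bigoplus_{s \in \mathbb{Z}_d^r} V_s$ with $\dim V_s = d^{n-r}$ for every $s$.

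I would then prove the two dichotomy claims. If $s(E)=s(E')$, then $F := (E')^{-1} E$ has trivial signature, i.e.\ $F \in \mathbf{Z}(\mathcal{G})$; for any $\ket{\phi}\in C$ we get $g_j F\ket{\phi} = F g_j\ket{\phi} = F\ket{\phi}$, so $FC \subseteq C$, and by the same reasoning applied to $F^{-1}$ the inclusion is an equality, yielding $EC = E'F C = E' C$. If instead $s(E)\neq s(E')$, then $EC\subseteq V_{s(E)}$ and $E'C\subseteq V_{s(E')}$ are orthogonal. Picking a representative $E^{(s)}$ with $s(E^{(s)})=s$ for each $s$, the inclusion $E^{(s)} C \subseteq V_s$ between spaces of equal dimension $d^{n-r}$ is an equality, so the $EC$'s exactly realize the joint eigenspaces and collectively span $\mathcal{H}$; orthogonal $EC$'s have, by construction, differing eigenvalues on some generator.

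Finally, the orthogonal expansion $\ket{\psi} = \sum_s P_{V_s}\ket{\psi}$ gives the desired decomposition: since $P_{V_s}\ket{\psi} \in V_s = E^{(s)} C$, setting $\ket{\eta_s} = (E^{(s)})^{-1} P_{V_s}\ket{\psi} \in C$ yields $\ket{\psi} = \sum_s E^{(s)}\ket{\eta_s}$ with the summands pairwise orthogonal. The main obstacle is the surjectivity of the signature map $s$; everything else is bookkeeping, but verifying surjectivity requires a careful symplectic argument for the qudit Pauli group (in particular, ruling out hidden relations among the $g_j$'s beyond the $-I\notin\mathcal{G}$ hypothesis), and this is where a fully rigorous proof needs to be most careful.
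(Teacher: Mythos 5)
Your proof is correct, but it reaches the crucial ``covering'' step by a different route than the paper. The paper's own argument is: since $\Pi_d^n$ spans all operators on $(\mathbb{C}^d)^{\otimes n}$, any vector $\ket{w}$ orthogonal to every $EC$ satisfies $\bra{w}M\ket{\phi}=0$ for every operator $M$ and every $\ket{\phi}\in C$, forcing $\ket{w}=0$; hence the $EC$'s span the whole space, and since each $EC$ sits inside a single joint eigenspace, every nonzero joint eigenspace is a union of $EC$'s. This sidesteps entirely the surjectivity of your signature map $s$ and the dimension formula $\dim C=d^{n-r}$ --- precisely the two points you flag as delicate, and which do require extra care: for composite $d$ the ring $\mathbb{Z}_d$ is not a field, so the symplectic-duality construction of destabilizers needs an independence hypothesis on the $g_j$'s that goes beyond the stated $-I\notin{\cal G}$, and the dimension count itself presupposes it. What your route buys is constructiveness (explicit representatives $E^{(s)}$ for each eigenvalue signature) and a fully spelled-out treatment of the equal-or-orthogonal dichotomy via $F=(E')^{-1}E\in\mathbf{Z}({\cal G})\Rightarrow FC=C$, a step the lemma asserts but the paper's proof does not make explicit. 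If you want to keep your structure while avoiding the surjectivity issue, replace that step by the spanning argument to show that $\bigcup_E EC$ spans $\mathcal{H}$, and retain the homomorphism $s$ only to sort the $EC$'s into equal-or-orthogonal classes.
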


%

\begin{definition}\label{def:weight}
\textbf{Weight of an error in stabilizer codes}

\noindent
Let $C$ be a stabilizer code on $n$ $d$-dimensional qudits, with 
generating set ${\cal G} \subset \Pi_d^n$. 
For $E\in \Pi_d^n$, we denote:
\begin{enumerate}
\item
The number of locations in which 
$E$ is non-identity - by $wt(E)$.
\item
The weight of $E$ modulo the group 
$A({\cal G})$ - by $wt_{\cal G}(E)$:
$wt_{\cal G}(E)=\min_{f\in A({\cal G})}\{wt(fE)\}$.  
\item
The weight of $E$ modulo the centralizer 
$\mathbf{Z}({\cal G})$ - by $wt_{\mathbf{Z}({\cal G})}(E)$:
$wt_{\mathbf{Z}({\cal G})}(E)=\min_{z\in \mathbf{Z}({\cal G})}\{wt(zE)\}$.  
\end{enumerate}
\end{definition}

\noindent
The above claims give rise to the following definition of distance in a stabilizer code:
\begin{definition}\label{def:stabdist}
\textbf{Distance of a stabilizer code}

\noindent
Let $C$ be a $k$-local stabilizer code on $n$ $d$-dimensional qudits, 
with generating set ${\cal G}\subset \Pi_d^n$.
The distance of $C$ is defined as the minimal weight of any logical operation on $C$:
$$dist(C) = min_{E\in \mathbf{Z}({\cal G})-A({\cal G})} wt(E).$$
\end{definition}

\begin{claim}\label{cl:distequiv} {\bf Equivalence of distance definitions}
A stabilizer code $C$ has $dist(C)\geq \rho$ by definition \ref{def:stabdist}, 
iff it has distance $\geq\rho$ by definition \ref{def:qeccdist}. 
\end{claim}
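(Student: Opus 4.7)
The plan is to prove the two directions separately and split the analysis by the position of a Pauli operator relative to the inclusion chain $A({\cal G}) \subseteq \mathbf{Z}({\cal G}) \subseteq \Pi_d^n$.

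\textbf{Forward direction (stabilizer $\geq \rho$ implies general $\geq \rho$).} Assume every $E \in \mathbf{Z}({\cal G})-A({\cal G})$ has $wt(E) \geq \rho$. Fix any Pauli $E$ with $wt(E) < \rho$ and orthogonal code states $|\phi\rangle,|\psi\rangle \in C$; I will show $\langle\phi|E|\psi\rangle = 0$ by trichotomy. If $E \in A({\cal G})$, then since each generator acts as $+1$ on code states, $E$ acts on $C$ as an overall phase $\alpha$, giving $\langle\phi|E|\psi\rangle = \alpha\langle\phi|\psi\rangle = 0$. If $E \notin \mathbf{Z}({\cal G})$, pick $g\in {\cal G}$ with $gE = \lambda Eg$ for some root of unity $\lambda \neq 1$; using that $g$ is unitary with $g|\psi\rangle=|\psi\rangle$ and $\langle\phi|g = \langle\phi|$, insert $g$ and commute to obtain $\langle\phi|E|\psi\rangle = \langle\phi|Eg|\psi\rangle = \lambda^{-1}\langle\phi|gE|\psi\rangle = \lambda^{-1}\langle\phi|E|\psi\rangle$, which forces $\langle\phi|E|\psi\rangle = 0$. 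The remaining case $E \in \mathbf{Z}({\cal G})-A({\cal G})$ is vacuous by the weight hypothesis.

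\textbf{Backward direction (general $\geq \rho$ implies stabilizer $\geq \rho$).} I will prove the contrapositive: if some $E \in \mathbf{Z}({\cal G})-A({\cal G})$ has $wt(E) < \rho$, then there exist orthogonal $|\phi\rangle,|\psi\rangle \in C$ with $\langle\phi|E|\psi\rangle \neq 0$. Because $E$ commutes with all generators, the eigenvalue lists of $E|\psi\rangle$ match those of $|\psi\rangle$, so $E$ maps $C$ into itself; equivalently, by Lemma \ref{lem:Pauli}, $EC = C$. It then suffices to show that $E$ does not act as a scalar on $C$, for then one can pick $|\psi\rangle \in C$ with $E|\psi\rangle \notin \operatorname{span}\{|\psi\rangle\}$, decompose $E|\psi\rangle = \alpha|\psi\rangle + \beta|\phi\rangle$ with $|\phi\rangle \in C$, $\langle\psi|\phi\rangle=0$, $\beta \neq 0$, and conclude $\langle\phi|E|\psi\rangle = \beta \neq 0$.

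\textbf{The main obstacle} is exactly the non-scalarity claim above, since an element of $\mathbf{Z}({\cal G})-A({\cal G})$ could a priori act as a global phase on the codespace (e.g., when $-I \notin A({\cal G})$). The intended way to remove this issue is to work in the Pauli group modulo its scalar subgroup $\{\omega_d^j I\}$, which is the standard convention in stabilizer theory and is implicit in Definition \ref{def:weight} (where $wt$ counts non-identity tensor factors, ignoring global phase); under this convention, a global-phase element is identified with $I \in A({\cal G})$, contradicting $E \notin A({\cal G})$. Thus the non-scalarity holds, completing the argument. The first case analysis is routine; it is this phase-vs-logical distinction that is the genuinely delicate step.
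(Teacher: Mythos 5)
Your proof follows essentially the same route as the paper's: the forward direction via the trichotomy $E\in A({\cal G})$ / $E\notin \mathbf{Z}({\cal G})$ / $E\in \mathbf{Z}({\cal G})-A({\cal G})$ (your insert-$g$-and-commute computation is just a self-contained version of the paper's appeal to Lemma \ref{lem:Pauli}), and the backward direction by exhibiting a low-weight element of $\mathbf{Z}({\cal G})-A({\cal G})$ that sends a code state to a non-proportional code state. The delicate point you isolate --- that such an $E$ cannot act as a scalar on $C$ --- is exactly the step the paper outsources to a citation of Gottesman; just note that quotienting out global phases only rules out $E$ being literally a scalar operator, whereas what is needed is that no Pauli outside $A({\cal G})$ can fix $C$ pointwise up to phase (i.e., the pointwise stabilizer of $C$ is exactly $A({\cal G})$, which follows from a dimension count on joint eigenspaces), so your resolution should be stated in that slightly stronger form.
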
 

The proof is given in the appendix subsection (\ref{sec:distequiv}).
A code $C$ on $n$ qudits is said to have a constant relative distance $\delta>0$, if its distance
is at least $\delta n$. 
We will make use of the following assumption which we isolate so that 
we can refer to it later on:

{~}

\noindent{\bf Remark:}
If there is a qudit $q$ 
such that 
all states in the code 
look like $|\alpha\rangle$ tensor with some state on the remaining qudits, 
for some fixed one-qudit state $|\alpha\rangle$ of that qudit $q$, 
we say that $q$ is {\it trivial} for the code. 
We will assume in the remainder of the paper that for all codes we handle, 
no qudits are trivial for the code,
since such qudits can be simply discarded.  

\subsection{Interaction graphs and their expansion} 
We assume in the rest of the paper that each qudit
participates in exactly $D_L$ constraints. 
We define bi-partite expanders, similar to 
\cite{Spi}, \cite{CRVW}, who used them to 
construct locally-testable classical codes. 
Note that we require expansion to hold only for sets of constant size $k$.

\begin{definition}\label{def:bipgraph}

\textbf{Bi-Partite Interaction Graph}

\noindent
Let $C$ be a quantum code on $n$ $d$-dimensional 
qudits, whose check terms $\left\{\Pi_i\right\}_i$  
are $k$-local.
We define the bi-partite interaction graph of $C$ $G=G(C) = (L,R;E)$ as follows: 
the nodes $L$ correspond to the qudits, the nodes $R$ correspond to the check terms,
and the set of edges connect each constraint $\Pi_i\in R$ 
to all the qudits in $L$ on which it acts non-trivially. 
We note that $G$ is left $D_L$-regular, and right $k$-regular.
\end{definition}

\begin{definition}\label{def:expbi}

\textbf{Bi-partite expansion}

\noindent
Let $G=(L,R;E)$ be a bi-partite graph, that is left $D_L$-regular, 
right $k$-regular.
A subset of qudits $S\subseteq L$ is said to be $\eps$-expanding, if $|\Gamma(S)| \geq |S| D_L (1-\eps)$, where $\Gamma(S)$ is the 
set of neighbors of $S$ in this graph. $\epsilon$ is called the expansion error 
for this set. 
$G$ is said to be $\epsilon$-small-set-expanding, if every subset 
$S\subseteq L$, $|S|\leq k$ has expansion error at most $\eps$.
\end{definition}

We state two technical facts on good bi-partite expanders that will be useful later on.
The proofs are in the appendix (\ref{sec:bipartite}).
\begin{fact} \label{fact:essence}
Consider $S\subseteq L$  in a bi-partite graph $G(L,R:E)$ 
and let $S$ be $\epsilon$-expanding, for $\eps<\frac{1}{2}$.
Then a fraction at most $2\eps$ of all
vertices of $\Gamma(S)$ have degree strictly larger than $1$ in $S$.
\end{fact}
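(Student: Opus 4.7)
The plan is to prove the fact by a straightforward double-counting argument on the edges between $S$ and $\Gamma(S)$, comparing the total edge count to the number of unique-neighbor edges.

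First I would partition $\Gamma(S)$ into the set $U$ of vertices of degree exactly $1$ in $S$ and the set $M$ of vertices of degree at least $2$ in $S$, so that $|\Gamma(S)| = |U| + |M|$. Since $G$ is left $D_L$-regular, the total number of edges incident to $S$ is exactly $|S|D_L$. Counting these edges from the right side yields
\[
|S| D_L \;=\; |U| + \sum_{v \in M} \deg_S(v) \;\ge\; |U| + 2|M|.
\]
Combining this with the expansion hypothesis $|\Gamma(S)| = |U| + |M| \ge (1-\epsilon)|S|D_L$ and subtracting gives $|M| \le \epsilon |S| D_L$.

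Finally I would divide by $|\Gamma(S)| \ge (1-\epsilon)|S|D_L$ to obtain the fractional bound
\[
\frac{|M|}{|\Gamma(S)|} \;\le\; \frac{\epsilon |S| D_L}{(1-\epsilon)|S| D_L} \;=\; \frac{\epsilon}{1-\epsilon}.
\]
The hypothesis $\epsilon < 1/2$ is then exactly what is needed to conclude $\epsilon/(1-\epsilon) \le 2\epsilon$, completing the proof. There is no real obstacle here; the only subtle point is making sure to divide by the lower bound on $|\Gamma(S)|$ in the right direction, and noting that the constant $2$ in the statement is what makes the inequality $\epsilon/(1-\epsilon) \le 2\epsilon$ usable precisely in the regime $\epsilon < 1/2$.
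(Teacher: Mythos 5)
Your proof is correct and is essentially the paper's argument: both rest on double-counting the $|S|D_L$ edges between $S$ and $\Gamma(S)$ against the expansion lower bound $|\Gamma(S)|\ge (1-\eps)|S|D_L$, with each multiply-covered neighbor absorbing at least two edges, and then using $\eps<1/2$ to pass from $\eps/(1-\eps)$ to $2\eps$. The paper phrases this through the average degree of vertices in $\Gamma(S)$ and a short monotonicity step, but the content is identical to your direct edge count.
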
 

\begin{fact}\label{fact:deg}
Let $S\subseteq L$ in a 
bi-partite graph $G = (L,R;E)$, such that $S$ is $\eps$ expanding, for $\eps<\frac{1}{2}$.
Then there exists a vertex $q\in S$, 
such that the fraction of neighbors of $q$ with at least two neighbors in $S$
is at most $2\eps$.
\end{fact}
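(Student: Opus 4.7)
The plan is to prove Fact \ref{fact:deg} by an averaging argument over $S$. For each $q\in S$, let $b(q)$ denote the number of neighbors of $q$ that have at least two neighbors in $S$. My goal is to show that the average value of $b(q)$ over $q\in S$ is at most $2\eps D_L$, from which the existence of a good vertex follows immediately.

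First, I would partition $\Gamma(S)$ into the set $U$ of unique neighbors (those with exactly one neighbor in $S$) and the set $B = \Gamma(S)\setminus U$ of ``bad'' neighbors (those with at least two neighbors in $S$). Counting edges between $S$ and $\Gamma(S)$ in two ways, the total is $|S| D_L$, and the edges going to $U$ contribute exactly $|U|$, so the number of edges between $S$ and $B$ is $|S| D_L - |U| = |S| D_L - |\Gamma(S)| + |B|$. On the other hand, because each vertex of $B$ has at least two neighbors in $S$, the edges from $S$ to $B$ are at least $2|B|$. Combined with the expansion hypothesis $|\Gamma(S)| \ge |S| D_L(1-\eps)$, this yields $|B| \le |S| D_L - |\Gamma(S)| \le \eps\, |S| D_L$.

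Plugging this back into the expression for edges between $S$ and $B$ gives $|S|D_L - |\Gamma(S)| + |B| \le 2\eps\, |S| D_L$. Since the left-hand side equals $\sum_{q\in S} b(q)$, I conclude that the average of $b(q)$ over $q \in S$ is at most $2\eps D_L$. Hence there exists some $q\in S$ for which $b(q) \le 2\eps D_L$, and since $q$ has exactly $D_L$ neighbors, the fraction of its neighbors with at least two neighbors in $S$ is at most $2\eps$, as desired.

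There isn't really a hard step here; this is a clean double-counting argument. The only subtle point is that one should extract the strengthened bound $|B| \le \eps\,|S| D_L$ directly from the edge count rather than using the cruder bound $|B| \le 2\eps\,|\Gamma(S)|$ that appears in the statement of Fact \ref{fact:essence}, since the latter would lose a constant factor and produce a bound of $3\eps$ instead of $2\eps$ in the conclusion.
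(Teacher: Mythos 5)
Your proof is correct and is essentially the same double-counting argument as the paper's: both establish $|\Gamma_{\geq 2}(S)| \le \eps |S| D_L$ from the edge count and the expansion hypothesis, and then average over $q \in S$ (the paper averages the unique-neighbor counts, you average the complementary bad-neighbor counts, which is the same step). No gaps.
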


\subsection{Notation}
We denote as follows. 
$d$ is the dimension of the qudits involved.
For a bi-partite graph we denote $G=(L,R;E)$, $L$ denotes the 
left set of vertices of size $|L|=n$ (corresponding to qudits), 
$R$ denotes the right vertices $|R| = m$ (corresponding to constraints), 
and $E$ is the set of edges between $L$ and $R$.  
$D_L$ will denote the left degree of a bi-partite graph. 
$k$ will denote the locality of the 
constraints, namely the right degree of the graph. 
Given $S\subseteq R$ (or $L$) in a bi-partite graph,   
$\Gamma(S)$ denotes the neighbor set of $S$ in $L$ (or $R$). 
${\cal N}(q)$ 
will denote the qudit-neighborhood 
of a qudit $q$ in $L$, namely 
all the qudits participating in all the constraints acting on $q$ 
(so, ${\cal N}_q = \Gamma^{(2)}(q)$). 
We will use
$\epsilon$ 
to denote the expansion error for bi-partite 
graphs (as in Definition \ref{def:expbi}). 
We will use $\delta$ (and sometimes $\mu$)
to denote the proximity, namely, the 
relative distance of a word from a code.

\section{Locally-testable quantum codes}\label{sec:qLTC}
In this section we define locally testable quantum codes, 
both in the general case, and in the specific case of stabilizer codes.
We then show that our definitions coincide for stabilizer codes. 

\subsection{Local testability of general quantum codes}

We first generalize definition (\ref{def:qeccdist}),
from a definition of distance
{\it of a code} to a definition of distance {\it from a code}: 
\begin{definition}\label{def:distcode}

\textbf{Distance from a quantum code}

\noindent
Let $C$ be a quantum code detecting error set ${\cal E}\subset \Pi_d^n$.
For any two orthogonal states $\ket{\phi},\ket{\psi}\in {\cal H}$,  
we define the Hamming distance 
between them 
$dist_C(\ket{\phi},\ket{\psi})$
as the maximal integer $\rho$, 
such that for any $E\in {\cal E}$, 
\dnote{why restrict to this and not 
any Pauli of this weight?}
\lnote{We could of course, but this definition is more general, because it treats general errors.  I think it should be kept that way.}
\dnote{this is very confusing, what's the advantage? let's discuss again}
with $wt(E)\leq \rho-1$, 
we have $\bra{\psi} E \ket{\phi}=0$.
Similarly, given a state $\ket{\phi}$ orthogonal to $C$, 
we say that the distance of
$\ket{\phi}$ from $C$ denoted by $dist(\ket{\phi},C)$ is the minimum 
over all $\ket{\psi}\in C$ of  $dist_C(\ket{\phi},\ket{\psi})$. 
\end{definition}

We note here that the distance of a state {\it from the code} in the above, can be much larger than the distance {\it of the code}.
This, akin to the classical case, where locally-testable codes are required 
to identify words far from the code, even if they cannot
be (uniquely) 
decoded, so that these codes can be used as proof systems.

\noindent
\begin{definition}

\textbf{Quantum Locally Testable Codes ($\qLTC$)}\label{def:qLTC}

\noindent
Let $R = R(\delta)$ be some function $R(\delta): [0,1] \mapsto [0,1]$,
this is called the soundness function.
Let $C$ be a quantum code  
on $n$ $d$-dimensional qudits, defined as the groundspace of
$H=\sum_{i=1}^m\Pi_C^i$, where 
$\Pi_C^i$ are $m$ $k$-local projections for some constant $k$. 
We say that $C$   
is {\it quantum locally testable} with soundness $R(\delta)$,
if:
$$
\forall \delta>0, \ket{\Psi}: ~~
\mbox{   }
dist(\ket{\Psi},C) \geq \delta n
\mapsto
\frac{1}{m}\langle
\Psi |H|\Psi
\rangle 
\geq
R(\delta).
$$
The query complexity of the code is defined to be $k$. 
\end{definition}

\subsection{Local testability of quantum stabilizer codes}

We now show that local testability defined above (Definition \ref{def:qLTC}) 
has a natural interpretation in the context of stabilizer codes.

\begin{definition}\label{def:sound}
\textbf{Local Testability for Stabilizer Codes ($\sLTC$)}\label{def:sLTC}

\noindent
Let $R(\delta)$ be some function $R(\delta): [0,1] \mapsto [0,1]$.
We say that a stabilizer code $C$ on $n$  $d$-dimensional qudits is an  
$\sLTC$ with query complexity $k$ and soundness $R(\delta)$,
if there exists a generating set ${\cal G}$ for $C$,
where each element has support  
$k$, such that the following holds: 
for any $E\in \Pi_d^n$ with $wt_{\mathbf{Z}({\cal G})}({\cal E}) \geq \delta n$, 
a uniformly random generator $g\in {\cal G}$ does not commute with $E$ w.p. at least $R(\delta)$.
\end{definition}

\subsubsection{Equivalence of definitions of locally testable codes}

We now show that the definition of stabilizer locally testable codes 
(Definition \ref{def:sLTC}) is in fact a special case of the general quantum locally testable codes (Definition \ref{def:qLTC}). 

\begin{claim}\label{cl:sLTCqLTC}

\noindent
\begin{enumerate}
\item
If $C$ is a Stabilizer code with generating set ${\cal G}$, 
which is an $\sLTC$ 
with query complexity $k$, and soundness $R(\delta)$,
then the set of projections $\left\{\Pi_g\right\}_{g\in {\cal G}}$, where $I-\Pi_g$ is the projection on the $1$-eigenspace 
of $g$, defines a $\qLTC$ with query complexity $k$, and soundness $R(\delta)$.
\item
If $C$ is a $\qLTC$ 
with query complexity $k$, and soundness $R(\delta)$,
defined by a set of projections 
$\left\{\Pi_g\right\}_{g\in {\cal G}}$,
such that the set $\left\{I-\Pi_g\right\}_{g\in {\cal G}}$ spans an Abelian subgroup of $\Pi_d^n$,
then $C$ is also an $\sLTC$ 
with query complexity $k$, and soundness $R(\delta)$. 
\end{enumerate}
\end{claim}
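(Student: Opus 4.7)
The plan is to reduce both directions to a single coset-wise calculation built on Lemma~\ref{lem:Pauli}. For any $E\in \Pi_d^n$ and any code state $\ket{\eta}\in C$, commutation in the Pauli group gives $g\,(E\ket{\eta}) = \omega_{g,E}\, E\ket{\eta}$, where $\omega_{g,E}$ is a $d$-th root of unity that equals $1$ iff $E$ commutes with $g$. Hence each projection $\Pi_g = I - \tfrac{1}{d}\sum_{k=0}^{d-1} g^k$ acts on the coset $EC$ either as the identity (if $E$ does not commute with $g$) or as zero (if it does). Combined with the orthogonality of distinct cosets, this gives the key identity: for any decomposition $\ket{\Psi}=\sum_i \alpha_i E_i \ket{\eta_i}$ from Lemma~\ref{lem:Pauli} with distinct coset representatives $E_i$,
\[
\tfrac{1}{m}\bra{\Psi}H\ket{\Psi} \;=\; \sum_i |\alpha_i|^2 \cdot \frac{|\{g\in {\cal G}: gE_i\neq E_i g\}|}{m}.
\]

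For Part~1, I would take a normalized $\ket{\Psi}\perp C$ with $dist(\ket{\Psi},C)\geq \delta n$, decompose as above (absorbing same-coset contributions into one representative), and argue that every $E_i$ with $\alpha_i\neq 0$ satisfies $wt_{\mathbf{Z}({\cal G})}(E_i)\geq \delta n$. If instead some $F$ of weight less than $\delta n$ satisfied $FE_i\in \mathbf{Z}({\cal G})$, then $FE_i\ket{\eta_i}$ would lie in $C$, while for $j\neq i$ the product $FE_j$ would sit in a coset distinct from $\mathbf{Z}({\cal G})$ (otherwise $E_j$ would be in the same coset as $E_i$), forcing $FE_j\ket{\eta_j}\perp C$. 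Taking $\ket{\psi}=FE_i\ket{\eta_i}/\|FE_i\ket{\eta_i}\|\in C$ would then give $\bra{\psi}F\ket{\Psi}\propto \alpha_i\neq 0$, contradicting Definition~\ref{def:distcode}. With the weight lower bound in hand, the $\sLTC$ hypothesis applied term-by-term in the displayed identity immediately yields the claimed $R(\delta)$ bound.

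For Part~2, I would reverse the construction. Given $E\in \Pi_d^n$ with $wt_{\mathbf{Z}({\cal G})}(E)\geq \delta n$, set $\ket{\Psi}=E\ket{\eta}$ for an arbitrary $\ket{\eta}\in C$; since $E\notin \mathbf{Z}({\cal G})$, the state is orthogonal to $C$. A short check shows $dist(\ket{\Psi},C)\geq \delta n$: a smaller-distance witness $F$ would require $FE\in \mathbf{Z}({\cal G})$, and Pauli-weight invariance under inversion and under replacing $z\in \mathbf{Z}({\cal G})$ by $z^{-1}$ makes the minimal such $wt(F)$ equal to $wt_{\mathbf{Z}({\cal G})}(E)$. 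The $\qLTC$ hypothesis then applies to $\ket{\Psi}$, and since its decomposition has a single term the displayed identity collapses to $\Pr_{g\sim {\cal G}}[gE\neq Eg]\geq R(\delta)$, which is exactly the $\sLTC$ conclusion.

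The main obstacle, in my view, is the bridge from the Knill-style matrix-element distance of Definition~\ref{def:distcode} to the algebraic quantity $wt_{\mathbf{Z}({\cal G})}$ on individual coset representatives. I must rule out the possibility that a short witness of proximity is spread as a conspiracy across several $E_i$'s in the decomposition rather than being visible on a single coset; the orthogonality clause in Lemma~\ref{lem:Pauli} is exactly the structural ingredient that prevents this, and the remainder of the argument is careful coset bookkeeping plus the elementary invariance of Pauli weight under inversion and phase multiplication.
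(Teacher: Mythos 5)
Your proposal is correct and follows essentially the same route as the paper: decompose the state into orthogonal Pauli cosets via Lemma \ref{lem:Pauli}, observe that each $\Pi_g$ acts as $0$ or $I$ on each coset so the energy splits coset-wise, and translate the Knill-style distance of Definition \ref{def:distcode} into the bound $wt_{\mathbf{Z}({\cal G})}(E_i)\geq \delta n$ on each representative (and conversely in Part 2). Your packaging of the energy computation as a single displayed identity, and your explicit handling of the inversion/phase invariance of Pauli weight in Part 2, are slightly more careful presentations of exactly the steps the paper takes.
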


\begin{proof}

\noindent
\textbf{$\sLTC \mapsto \qLTC$}

\noindent
By definition of a stabilizer code, for any $\ket{\phi}\in C$, we have $g\ket{\phi}=\ket{\phi}$ for all $g\in {\cal G}$,
so $\Pi_g \ket{\phi}=0$ for all $g\in {\cal G}$.
Next, consider a state $\ket{\phi}$ orthogonal to $C$, such that $dist(\ket{\phi},C)\geq \delta n$.
We would now like to show that a projection chosen randomly 
from $\left\{\Pi_g\right\}_{g\in G}$
is violated by $\ket{\phi}$ with 
probability at least 
$R(\delta)$. 
Consider the following orthogonal decomposition of $\phi$ as 
implied by lemma (\ref{lem:Pauli}):
\begin{equation}\label{eq:partition}
 \ket{\phi} = 
\sum_i \alpha_i \ket{\alpha_i} = \sum_i \alpha_i E_i \ket{\eta_i},
\end{equation} 
where $E_i\in \Pi_d^n$, $\ket{\eta_i}\in C$, and 
$E_i \ket{\eta_i}$ are orthogonal. 
We claim that for each $i$, $wt_{\mathbf{Z}({\cal G})}(E_i) \geq \delta n$:
otherwise, it is easy to see that 
there exists some $E'\in \Pi_d^n$, $wt(E')<\delta n$, such that for 
at least one $i$, 
we have $E' E_{i} \in \mathbf{Z}({\cal G})$. 
Since for any $J\in\mathbf{Z}({\cal G})$, $JC=C$, we have  
that alternatively, $E' \ket{\alpha_i} \in C$.
Since $E'$ is unitary, and the $\ket{\alpha_i}$'s are orthogonal, then the $E' \ket{\alpha_i}$'s are orthogonal, thus $E' \ket{\phi}$ has a non-zero projection on $C$.
Contrary to the assumption that $dist(\ket{\phi},C)\geq \delta n$.

If $E_i$ and $g\in {\cal G}$ do not commute, 
$E_i g = \omega g  E_i$, for some $\omega \neq 1$. 
In particular, $E_i\ket{\eta_i}$ 
is a $\omega$ eigenstate of $g$. This means it is orthogonal to 
the $1$-eigenspace of $g$, and therefore:
$$
\bra{\alpha_i} \Pi_g \ket{\alpha_i} = 1.
$$
Yet, by the $\sLTC$ property of $C$, for each $i$, $E_i$ does not commute with a fraction at least $R(\delta)$ of the generators of ${\cal G}$. 
Thus, a randomly chosen check term is violated by $\ket{\alpha_i}$ with probability at least $R(\delta)$,
so
$$
\frac{1}{|{\cal G}|}
\sum_{g\in {\cal G}} \langle \alpha_i | \Pi_g | \alpha_i \rangle
\geq
R(\delta).
$$
Since by lemma (\ref{lem:Pauli}) the decomposition above coincides with the simultaneous eigenbasis of ${\cal G}$,
we have:
$$
\frac{1}{|{\cal G}|}
\langle
\phi | 
\sum_{g\in {\cal G}} \Pi_g |
\phi
\rangle
=
\frac{1}{|{\cal G}|}
\sum_i \sum_{g\in {\cal G}} |\alpha_i|^2 \langle \alpha_i | \Pi_g | \alpha_i \rangle
\geq
R(\delta).
$$

\noindent
\textbf{$\qLTC \mapsto \sLTC$}

\noindent
First, by definition, the set of states that are in the mutual 
groundspace of the $\Pi_g$'s, are stabilized (i.e. eigenvalue $1$) w.r.t. 
the terms ${\cal G}$, and vice versa. 
Now, let $E\in \Pi_d^n$, whose weight modulo $\mathbf{Z}({\cal G})$ is 
at least $\delta n$.
Let $\ket{\phi}\in C$ be any code state, and denote $\ket{\psi} = E \ket{\phi}$.
We claim that $dist(\ket{\psi},C)\geq \delta n$.
Otherwise there exists $E'\in \Pi^n$, $wt(E')<\delta n$,
such that $E' \ket{\psi}$ has a non-zero projection on $C$, 
hence $E' E \ket{\phi}$ has a nonzero projection on $C$,
so by lemma (\ref{lem:Pauli}), we have that $E'E C = C$.
Therefore, $E' E$ commutes with all ${\cal G}$, 
and hence $E' E\in \mathbf{Z}({\cal G})$, which implies that
$wt_{\mathbf{Z}({\cal G})}(E) < \delta n$, in contradiction.
By the $\qLTC$ property of $C$, we have
\begin{equation}\label{eq:gpenalty}
\langle
\psi |
\sum_{g\in {\cal G}} \Pi_g |
\psi
\rangle
\geq
|{\cal G}| \cdot R(\delta).
\end{equation}
Since $\ket{\psi} = E \ket{\phi}$, then for any generator $g$
$g \ket{\psi} = g E \ket{\phi} = \omega E g \ket{\phi} = \omega E \ket{\phi}$, for some $\omega\in \mathbf{C}$.
So for any $g\in {\cal G}$ $\ket{\psi}$ is some eigenstate of $g$.
Hence $\ket{\psi}$ is either 
in the $1$-eigenspace of $\Pi_g$
or in its $0$-eigenspace, so by equation (\ref{eq:gpenalty}) it violates
a fraction at least $R(\delta)$ of all generators ${\cal G}$.
\end{proof}

\section{Bound on the soundness of stabilizer $\LTC$s on small-set 
expanders}\label{sec:QECC}
In this section we prove theorem \ref{thm:QECC}. 
We define the {\it relative soundness} formally: 
\begin{definition}\label{def:smallr}{\bf Relative Soundness}
Define 
$$
r(\delta) : [0,1] \mapsto [0,1],
$$
as follows: 
$r(\delta)=R(\delta)/\Theta(\delta)$, where 
$\Theta(\delta)\equiv \min\{\delta k,1\}$. 
\end{definition}
We note that in the all the following, we will be interested 
in $\delta<1/k$ and in this range $r(\delta)=R(\delta)/k\delta$. 

\subsection{A useful fact about restrictions of stabilizers} 

\begin{definition}
\textbf{Restriction of stabilizers}

\noindent
For a $E\in \Pi_d^n$, let $E|_q$ denote the $q$-th component of the tensor product $E$, and let $E|_{-q}$ denote the tensor product of all terms except the $q$-th.
Similarly, for a generating set ${\cal G}$, we denote by ${\cal G}|_q$ as the set $\left\{g|_q \mbox{, } g\in {\cal G} \right\}$, and similarly for ${\cal G}|_{-q}$.

\end{definition}

\noindent
We now prove a useful fact: that the restrictions to a given qudit 
$q$ of all the generators of a stabilizer code with absolute 
distance strictly larger than 
$1$ cannot all commute.  

\begin{fact}\label{fact:nocommute}
Let $C$ be a stabilizer code 
with absolute minimal distance strictly larger than $1$.
Then for any qudit $q$, and any generator $g$ acting on $q$, there exists 
another
generator $h(q)$ 
acting on $q$ such that $[g|_q, h|_q] \neq 0$.
\end{fact}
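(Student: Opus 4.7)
The approach is proof by contradiction. Suppose some generator $g \in {\cal G}$ acts on qudit $q$, and every other generator $h \in {\cal G}$ acting on $q$ satisfies $[g|_q, h|_q] = 0$. I will derive a contradiction using $dist(C) > 1$ together with the standing no-trivial-qudit assumption from the remark in Subsection~\ref{sec:QECCdefs}.

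First I would show that the single-qudit Pauli $g|_q \otimes I \in \Pi_d^n$, supported only on $q$, lies in the centralizer $\mathbf{Z}({\cal G})$. For any $h$ not acting on $q$, both operators act as identity off $q$ and therefore commute. For $h = g$, the restriction $g|_q$ commutes with itself. For any other generator $h \neq g$ acting on $q$, the global commutator factors as $[g|_q \otimes I,\, h] = [g|_q, h|_q] \otimes h|_{-q}$, which vanishes by assumption. Since $g$ acts on $q$ we have $g|_q \neq I$, and so $g|_q \otimes I$ has weight exactly $1$.

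Next I would apply the distance hypothesis. By Claim~\ref{cl:distequiv}, $dist(C) > 1$ forbids any weight-$1$ element from lying in $\mathbf{Z}({\cal G}) \setminus A({\cal G})$, since such an element would be a non-trivial weight-$1$ logical operator. Hence $g|_q \otimes I \in A({\cal G})$, so every code state $\ket{\phi} \in C$ is stabilized by it. This forces the reduced state of every codeword on qudit $q$ to be supported in a proper eigenspace of the non-identity Pauli $g|_q$.

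The main obstacle is converting this eigenspace restriction into the triviality condition of the remark. When $d$ is prime every non-identity element of $\Pi_d$ has order exactly $d$, so its $+1$-eigenspace is $1$-dimensional; the code then factorizes on $q$ as $\ket{\alpha} \otimes \ket{\cdot}$ for a fixed one-qudit state $\ket{\alpha}$ -- exactly the definition of triviality in the remark, contradicting the standing assumption. For composite $d$ the $+1$-eigenspace of $g|_q$ may be strictly larger than one-dimensional, but in that case qudit $q$ is effectively used by $C$ only through a proper subsystem, which should already have been discarded under the standing WLOG reduction. In either case we reach a contradiction, forcing the existence of another generator $h \neq g$ acting on $q$ with $[g|_q, h|_q] \neq 0$, as claimed.
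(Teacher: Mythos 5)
Your proposal is correct and follows essentially the same route as the paper: both arguments place $g|_q\otimes I_{-q}$ in the centralizer $\mathbf{Z}({\cal G})$ and then rule out both $A({\cal G})$ (via the no-trivial-qudit assumption) and $\mathbf{Z}({\cal G})\setminus A({\cal G})$ (via $dist(C)>1$), differing only in the order in which the two cases are dispatched. Your additional observation about composite $d$ -- that the $+1$-eigenspace of a non-identity generalized Pauli need not be one-dimensional, so triviality of $q$ requires a slightly stronger reading of the standing assumption -- flags a subtlety the paper's own proof glosses over.
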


\begin{proof}
Assume on the negative, that there is a qudit $q$,
and a generator $g$, such that for all other generators $h$, we have $[g|_q,h|_q]=0$.
Let $Q=g|_q$.
We have that $Q'=Q\otimes I_{-q}$, namely the tensor product with identity 
on the other qubits, commutes with all $g\in {\cal G}$, and thus 
$Q'\in \mathbf{Z}({\cal G})$. 
However, $Q'$ cannot be inside $A({\cal G})$, since otherwise 
$q$ is in 
some constant state (the $1$ eigenvector of $Q$) $\ket{\alpha}$
for all code states, and thus $q$ is trivial for the 
code (see remark at the end of Subsection \ref{sec:QECCdefs}). 
Hence, $Q'\in \mathbf{Z}({\cal G})-A({\cal G})$, so the distance of the code by definition (\ref{def:stabdist}) is $1$, in contradiction to our assumption. 
\end{proof}

\subsection{Proof of Theorem \ref{thm:QECC}}
In the proof we will make use of "sparse'' sets of constraints, 
defined as follows. 

\begin{definition}
\textbf{$1$-independent set of constraints}\label{def:Lind}

\noindent 
For a given constraint $u$, consider $\Gamma^3(u)$, the set of qudits 
acted upon by constraints which act on qudits in $u$. 
A set of constraints $U$ is said to be $1$-independent if for any 
two constraints $u,w \in U$, 
$\Gamma^3(u)\cap\Gamma^3(w)=\Phi$.  
\end{definition}

\begin{proof}(Of theorem \ref{thm:QECC}) 

\paragraph{Generating the error}
We want to construct an error $E \in \Pi_d^n$, $wt_{\mathbf{Z}({\cal G})}(E)\geq \delta n$, that will not violate too many constraints in ${\cal G}$.  
Let $C$ be a stabilizer code 
with a $k$-local generating set ${\cal G}$, such that the
bi-partite interaction graph of $C$
is an $\eps$ small-set bi-partite expander. 
Let $U$ be a $1$-independent set of constraints of size $\delta n$.
We note that since $\delta \leq \frac{1}{k^3 D_L}$ 
a $1$-independent set of this size must exist, by a simple greedy algorithm. 
For a given constraint 
$u\in U$, and $i\in [k]$, let $\alpha_i(u)$ 
denote the number of 
generators $g\in {\cal G}$
that act on a qudit $i$ in $u$ and intersect $u$ in at least one other
qudit.
Then for each $u\in U$ we define $q(u)$ to 
be a qudit of minimal $\alpha_i(u)$ over all $i\in [k]$.
Let $T = \left\{q(u) | u\in U\right\}$.
Let us define an error pattern:
$$ E = \bigotimes_{u\in U} u|_{q(u)}.$$

We first note that $E \notin \mathbf{Z}({\cal G})$; 
This is true by Fact (\ref{fact:nocommute}):
for each qudit $q$ in the support of $E$, 
$E|_q$ does not commute with $h|_q$ for some $h\in {\cal G}$.
But since $T$ is induced by a $1$-independent set, $h$ 
does not touch any other qudit in the support of $E$ except $q$, 
so this implies $[h,E]=[h|_q,E|_q]\neq 0$. 
We will now show that $E$ has large weight modulo $\mathbf{Z}({\cal G})$, 
but is penalized by a relatively small fraction of ${\cal G}$.

\paragraph{Weight Analysis}
By definition, we have that $wt(E) = |T|=|U|=\delta n$.
We claim that:
\begin{equation}\label{eq:tweight}
wt_{\mathbf{Z}({\cal G})}(E)= |T|
\end{equation}
Since  $\delta$ was chosen to be smaller than half the distance of 
the code $C$, 
$wt_{\mathbf{Z}({\cal G})}(E)= wt_{\cal G}(E)$
and so it suffices to lower-bound $wt_{{\cal G}}(E)$.
 
Suppose on the negative that $wt_{\cal G}(E)< |T|$.
Then there exists $\Delta\in A({\cal G})$, such that $E' = \Delta E$ has $wt(E')<|T|$.
Since the weight of $E'$ is strictly smaller than 
that of $E$, there must be one qudit $q_0$ in $T$, s.t. 
on the neighborhood ${\cal N}(q_0)$ the weight of $E'$ is strictly 
smaller than that of $E$, which is $1$; 
namely, $E'$ must be equal to the identity on all the qudits in the qudit-
neighborhood of $q_0$.
Here, we have used the fact that the qudit-neighborhoods of different qudits 
in $T$ 
are non-intersecting. This is true  
by the fact that the qudits were chosen by picking one qudit from each 
constraint out of  
a $1$-independent set of constraints 
(definition \ref{def:Lind}). 
This means that $\Delta$ must be equal to the inverse of $E$ 
on this neighborhood. But this inverse is exactly the following: 
It is equal to $E|_{q_0}^{-1}$ on $q_0$, and to the identity on all other 
qudits in the neighborhood.
By construction , $E|_{q_0}$ on $q_0$, (and therefore also 
$E^{-1}|_{q_0}=\Delta_{q_0}$) does not commute with $h|_{q_0}$, for some $h\in {\cal G}$.
Since $\Delta$ is identity on all qudits of $h$ other than $q_0$, this implies that 
$\Delta$ does not commute with $h$, in contradiction 
to the fact that $\Delta \in A({\cal G})$. 

\paragraph{soundness Analysis}
We upper-bound the number of generators that do not commute with $E$.
For each $u\in U$, the number of generators $g\in {\cal G}$ 
that do not commute with $E|_{q(u)}$ is at most
the number of generators that share at least two qudits with $u$.
By fact (\ref{fact:deg}) there exists a qudit 
$q\in \Gamma(u)$ such that the fraction of its check terms 
with at least two qudits in $\Gamma(u)$ is at most $2\eps$; 
since we chose $q(u)$ to be the qudit that minimizes that fraction 
over all qudits on which $u$ acts, 
we have that for $q(u)$, the fraction of terms acting on it 
that intersect $u$ with at least $2$ qudits is at most $2\epsilon$. 
Thus, the absolute number of generators acting on $q(u)$ 
that intersect $u$ in at least two qudits is at most $2\eps D_L$.
Hence the overall number of 
generators violated by $E$ is at most $2 \eps |T| D_L$. 
By Equation \ref{eq:tweight} this is equal to 
$2\eps D_L wt_{\mathbf{Z}({\cal G})}(E)$.
Using $D_Ln=mk$, 
we have $R(\delta)\le 2\eps k \delta$ and so $r(\delta) \leq 2 \eps$.
\end{proof}

We now show that a slightly stronger version of the above theorem holds.
This version will be used for showing Theorem (\ref{thm:sound}). 
\begin{claim} \label{cl:QECC}
Let $C$ be a good
stabilizer code, with a $k$-local succinct generating set, where each qubit is examined 
by $D_L$ constraints.
If there exists
a $1$-independent set of constraints $U\subseteq R$, 
s.t. $|U| = \delta n$ for some $0<\delta<1/k$, 
and $\Gamma(U)$,  
the set of qudits that the constraints in $U$ act on satisfies 
$|\Gamma(\Gamma(U))|\geq |\Gamma(U)| D_L (1-\eps)$, 
then for any $\delta'\leq \delta$ we have that 
$r(\delta')\le 2\epsilon$.
\end{claim}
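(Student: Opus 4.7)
The plan is to adapt the proof of Theorem \ref{thm:QECC} by replacing its per-constraint application of Fact \ref{fact:deg} with a single global edge-counting argument that uses only the hypothesized expansion of $\Gamma(U)$. First I would define $\alpha_i(u)$, $q(u)$, and the candidate error $E=\bigotimes_{u\in U}u|_{q(u)}$ exactly as in the proof of Theorem \ref{thm:QECC}. The weight lower bound $wt_{\mathbf{Z}(\mathcal{G})}(E)\ge\delta n$ is then inherited verbatim, since that part of the earlier argument uses only the $1$-independence of $U$ together with Fact \ref{fact:nocommute} (and the goodness of $C$ ensures $\delta n$ stays below half the code distance, so $wt_{\mathbf{Z}(\mathcal{G})}$ and $wt_{\mathcal{G}}$ coincide).

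The central new ingredient is a global bound on $\sum_{u\in U}\alpha_{q(u)}(u)$. Since $q(u)$ is a minimizer over the $k$ qudits of $u$, this sum is at most $\tfrac{1}{k}\sum_u\sum_{i\in[k]}\alpha_i(u)$. Swapping the order of summation,
\[
\sum_{u\in U}\sum_{i\in[k]}\alpha_i(u)=\sum_{g\in\mathcal{G}}\sum_{u\in U}|\Gamma(g)\cap\Gamma(u)|\cdot\mathbf{1}\!\left[|\Gamma(g)\cap\Gamma(u)|\ge 2\right].
\]
By $1$-independence the sets $\Gamma(u)$ are pairwise disjoint, so $\sum_{u\in U}|\Gamma(g)\cap\Gamma(u)|\le e_g:=|\Gamma(g)\cap\Gamma(U)|$, and the entire inner sum vanishes unless $e_g\ge 2$. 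The identity $\sum_g e_g=|\Gamma(U)|D_L=k\delta nD_L$ combined with the hypothesis $|\Gamma(\Gamma(U))|\ge(1-\epsilon)k\delta nD_L$ yields $\sum_g(e_g-1)\mathbf{1}[e_g\ge 1]\le\epsilon k\delta nD_L$, and then using $e_g\le 2(e_g-1)$ for $e_g\ge 2$ we get $\sum_{g:e_g\ge 2}e_g\le 2\epsilon k\delta nD_L$. Dividing by $k$ I conclude $\sum_{u\in U}\alpha_{q(u)}(u)\le 2\epsilon\delta nD_L$.

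To obtain the bound for every $\delta'\le\delta$, I would pick $U'\subseteq U$ of size $\delta'n$ consisting of the $u$'s with the smallest $\alpha_{q(u)}(u)$. A standard averaging argument (the smallest $\delta'n$ values have average at most the overall average $2\epsilon D_L$) then gives $\sum_{u\in U'}\alpha_{q(u)}(u)\le 2\epsilon\delta'nD_L$. Setting $E'=\bigotimes_{u\in U'}u|_{q(u)}$, the same $1$-independence plus Fact \ref{fact:nocommute} argument yields $wt_{\mathbf{Z}(\mathcal{G})}(E')\ge\delta'n$, while the number of generators failing to commute with $E'$ is at most $\sum_{u\in U'}\alpha_{q(u)}(u)\le 2\epsilon\delta'nD_L$; via $m=D_Ln/k$ this is a $2\epsilon k\delta'$ fraction of $\mathcal{G}$. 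Hence $R(\delta')\le 2\epsilon k\delta'$ and $r(\delta')\le 2\epsilon$.

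The hard step is precisely the removal of Fact \ref{fact:deg}: with expansion only of the single set $\Gamma(U)$ (rather than small-set expansion), no individual $\Gamma(u)$ of size $k$ is guaranteed to be expanding, so the per-constraint local argument of Theorem \ref{thm:QECC} fails. The workaround is the global accounting above, in which each multi-hit generator $g$ is charged across all $u\in U$ simultaneously; disjointness of the $\Gamma(u)$'s coming from $1$-independence is exactly what keeps this total charge down to $O(\epsilon k\delta nD_L)$.
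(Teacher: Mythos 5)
Your proof is correct and follows essentially the same route as the paper's: both arguments bound the total number of ``multi-hit'' constraint--qudit incidences by $O(\eps)\,|\Gamma(U)|D_L$ using the hypothesized expansion of $\Gamma(U)$, average over the $k$ qudits of each $u\in U$ to select one qudit per constraint, handle $\delta'\le\delta$ by a further subselection/averaging step, and inherit the weight lower bound from the $1$-independence argument of Theorem \ref{thm:QECC}. Your global edge count via $\sum_g(e_g-1)\mathbf{1}[e_g\ge 1]\le \eps k\delta n D_L$ is the same bookkeeping the paper organizes through the partition of $\Gamma(U)$ into $k$ classes and the unique-neighbor set $\Gamma_1(S)$; note that, exactly as in the paper's own proof, the step $wt_{\mathbf{Z}({\cal G})}(E')=wt(E')$ implicitly requires $\delta\le dist(C)/2n$, not merely $\delta<1/k$.
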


\begin{proof}
For a set $S\subseteq L$, let $\Gamma_1(S)$ denote the number of neighbors of $S$ having a single neighbor in $S$, and let 
$\Gamma_{\geq 2}(S) \equiv \Gamma(S) - \Gamma_1(S)$.
Put $S = \Gamma(U)$, and let $S = \bigsqcup_{i=1}^k S_i$, denote a partition of $S$ into $k$
disjoint sets, where each $S_i$ takes a single (arbitray) qubit from each $\Gamma(u)$, $u\in U$.
By assumption, $|\Gamma(S)| \geq |S| D_L (1-\eps)$, whereas the total degree of $S$ is $|S| D_L$.
Hence, $|\Gamma_{\geq 2}(S)| \leq |S| D_L \eps$, so $|\Gamma_1(S)| \geq |S| D_L (1-2\eps)$.
Since each unique neighbor of $S$ examines exactly one partition $S_j$, 
there exists a partition $S_0$ examined by at least $|S_0| D_L (1-2\eps) = \delta n D_L (1-2\eps)$, constraints from $\Gamma_1(S)$.

Now, given any $\delta'\leq \delta$, 
let $S_0'$ be a subset of $S_0$ of size $\delta' n$, maximizing the ratio $\Gamma_1(S')/ |S'|$, over all sets $S'\subseteq S_0$ of this size.
Since each element of $\Gamma_1(S)$ examines just one element of $S$, such a set exists, with ratio at least $D_L (1-2\eps)$.
A tensor-product error ${\cal E}$ defined by taking, for each $u\in U$ the restriction to its qubit in $S_0'$,
we have by equation (\ref{eq:tweight}) $wt_{\mathbf{Z}({\cal G})}(E) = \delta' n$, 
whereas the maximal penalty is at most $2 \eps D_L \delta' n$.
Since $\delta'\leq \delta<1/k$ it follows that $r(\delta')\leq 2\eps$.
\end{proof} 

\section{An upper-bound on soundness}\label{sec:sound}

We now show an absolute constant strictly less than $1$, upper-bounding the 
relative soundness of any 
good quantum stabilizer code spanned by $k$-local generators, whose qudits 
are acted upon by $D_L$ stabilizers each.  
We start with an easy alphabet based upper bound.

\subsection{Alphabet-based bound on soundness}\label{sec:alphabet}

In attempting to understand soundness of good 
stabilizer codes, one must first account for
limitations on the soundness that seem almost trivial, 
and occur even when there is just a single error. 
\begin{definition}\label{def:single} 
{\bf Single error soundness}

\noindent
Let $t(d) = 1/(d^2-1)$; The single error
relative soundness in dimension $d$ 
is defined to be $\alpha(d)=1-t(d)$. 
\end{definition}

The motivation for the above definition is as follows. 
For any qudit $q$, there always exists $Q\in \Pi_d$, $Q\neq I$, 
such that a fraction at least $t(d)$ of the generators touching 
$q$ are equal to $Q$ when restricted to $q$. 
If we consider a single-qudit error on $q$ to be equal to 
$Q$, then it would commute with $t(d)$ of the generators acting on 
$q$; thus they can violate at most 
$\alpha(d)$ of the constraints acting on $q$.
Hence, one can expect that it is possible to construct an error of linear 
weight, whose relative soundness $r(\delta)$ 
is bounded by the single error relative soundness 
using qudits whose neighboring constraints are far from each other.

Indeed, we show:

\begin{fact}\label{fact:alphabet}
\textbf{Alphabet bound on soundness}

\noindent
For any good stabilizer code $C$ on $n$ $d$-dimensional qudits, 
with a $k$-local succinct generating set ${\cal G}$, whose left-degree is $D_L$,
we have $r(\delta) \leq \alpha(d)$, for any $\delta \leq 1/(k^3 D_L)$.
\end{fact}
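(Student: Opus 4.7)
The plan is to mirror the construction used in Theorem \ref{thm:QECC}, but to choose the single-qudit errors by a pigeonhole over the $d^2-1$ non-identity elements of $\Pi_d$ rather than by an expansion argument. First, since $\delta \le 1/(k^3 D_L)$, a greedy selection produces a $1$-independent set $U \subseteq R$ of constraints with $|U| = \delta n$ (each constraint excludes at most $k^3 D_L - 1$ others from its $3$-neighborhood in $G$). For each $u \in U$, pick any qudit $q(u)$ in the support of $u$; the $D_L$ generators touching $q(u)$ restrict on $q(u)$ to non-identity elements of $\Pi_d$, of which there are only $d^2-1$ (modulo scalar phases). By pigeonhole, there exists a fixed non-identity Pauli $Q(u) \in \Pi_d$ that equals $g|_{q(u)}$ for at least $t(d) D_L$ of these generators. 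Define the tensor product error
$$
E = \bigotimes_{u \in U} Q(u)_{q(u)}.
$$

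Next I would show $wt_{\mathbf{Z}({\cal G})}(E) = \delta n$, which I expect to be the main obstacle. Since the code is good, $\mathrm{dist}(C) = \Omega(n)$, and for sufficiently small $\delta$ the value $\delta_0 \le 1/(k^3 D_L)$ will also satisfy $\delta_0 \le \mathrm{dist}(C)/(2n)$, so by the triangle inequality on weights one has $wt_{\mathbf{Z}({\cal G})}(E) = wt_{{\cal G}}(E)$ (any reducing $z \in \mathbf{Z}({\cal G}) \setminus A({\cal G})$ would yield a logical of weight $< 2\delta n < \mathrm{dist}(C)$, a contradiction). To bound $wt_{{\cal G}}(E)$, suppose $\Delta \in A({\cal G})$ gave $wt(\Delta E) < \delta n$. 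Because $U$ is $1$-independent, the qudit-neighborhoods ${\cal N}(q(u))$ are pairwise disjoint, so some $q(u_0)$ must have $\Delta E \equiv I$ on all of ${\cal N}(q(u_0))$. Then $\Delta$ equals $Q(u_0)^{-1}$ at $q(u_0)$ and the identity on the rest of that neighborhood. Now $Q(u_0) = g|_{q(u_0)}$ for some generator $g$, so Fact \ref{fact:nocommute} supplies another generator $h$ acting on $q(u_0)$ with $[Q(u_0), h|_{q(u_0)}] \ne 0$. Since the support of $h$ lies inside ${\cal N}(q(u_0))$ and $\Delta$ is the identity there except at $q(u_0)$, we get $[\Delta, h] \ne 0$, contradicting $\Delta \in A({\cal G}) \subseteq \mathbf{Z}({\cal G})$. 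This gives $wt_{\mathbf{Z}({\cal G})}(E) \ge |U| = \delta n$ as required.

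Finally, I would bound the number of violations. For each $u \in U$, at least $t(d) D_L$ of the generators touching $q(u)$ commute with $E$ on $q(u)$ (by construction of $Q(u)$), and since $E$ is the identity outside the disjoint neighborhoods ${\cal N}(q(u))$, all those generators commute with $E$ globally; hence at most $(1-t(d)) D_L = \alpha(d) D_L$ of the generators touching $q(u)$ are violated. By $1$-independence the sets of generators touching distinct $q(u), q(u')$ are disjoint, so the total number of violated generators is at most $\alpha(d) D_L \cdot \delta n = \alpha(d) k \delta \cdot m$, using $D_L n = k m$. Thus $R(\delta) \le \alpha(d) k \delta$, and since $\delta < 1/k$ we conclude $r(\delta) = R(\delta)/(k\delta) \le \alpha(d)$, as desired.
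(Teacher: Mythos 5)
Your proposal is correct and follows essentially the same route as the paper's proof: a $1$-independent set of constraints, one qudit per constraint, a pigeonhole/majority choice of a non-identity Pauli agreeing with at least a $t(d)$-fraction of the incident generators' restrictions, the weight lower bound via disjoint neighborhoods and Fact \ref{fact:nocommute}, and the per-qudit violation count of at most $\alpha(d)D_L$. The only cosmetic difference is that you phrase the choice of $Q(u)$ as a pigeonhole over the $d^2-1$ non-identity classes, while the paper calls it the majority element $MAJ(q)$ of the multiset of restrictions; these are the same step.
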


\begin{proof}
Similarly to Theorem (\ref{thm:QECC}), given the parameters assumed 
in the statement of the fact, there exists a 
$1$-independent set of constraints $U$ of size $\delta n$.
For each constraint $u\in U$ we select arbitrarily 
some qubit $q=q(u)\in \Gamma(u)$ and examine 
the restrictions to $q$ of all stabilizers acting non-trivially on $q$.
Let $P(q)$ denote the set of all such restrictions. 
Let $MAJ(q)$ denote the element of $\Pi_d$ that appears 
a maximal number of times in $P(q)$.
We then set $E = \bigotimes_{u\in U} MAJ(q(u))$.
We first realize that $E$ is an error: 
we want to show that there exists a generator $g$ such that 
$E$ and $g$ do not commute.
Otherwise, 
$E$ commutes with all generators; 
Since by construction, each generator intersects $E$ with at most one qudit, 
this means that the restrictions to $q$ also commute: $[E|_q,g|_q]=0$
for all $q(u)$ acted upon by $E$. This is a  
contradiction by Fact (\ref{fact:nocommute}); hence, there must be a 
generator which does not commute with $E$, so $E$ is indeed an error. 
Similarly to the proof of Equation (\ref{eq:tweight}) in the proof of 
Theorem (\ref{thm:QECC}), we also have  
$wt_{\mathbf{Z}({\cal G})}(E) = \delta n$.
Furthermore, for each qudit $q$, the fraction of generators on $q$, whose restriction to $q$ does not commute with $E|_q$ is at most $\alpha(d)$, since the number
of appearances of $E|_q = MAJ(q)$ in $P(q)$ is at least $t(d) = 1-\alpha(d)$.
Hence the number of violated constraints is at most $\alpha(d) \cdot |U| 
\cdot D_L = \alpha(d)\delta n D_L$.
Since $\delta < 1/k$ it follows that $r(\delta) \leq \alpha(d)$.
\end{proof}

We note that classically, there is no direct analogue to the requirement of 
non-commutativity to achieve constraint violation. No analogue of the 
$\alpha(d)$ thus exists. 

\subsection{Separation from alphabet-based soundness}

In this section we show that the alphabet-based 
bound on the relative soundness 
in fact cannot be achieved, and the relative soundness is further bounded 
by a constant factor strictly less than $1$, which is due  
to what seems to be an inherently quantum phenomenon. 
We will use the geometry of the underlying interaction graph 
to achieve this separation, by treating 
differently expanding instances and non-expanding instances.
Before stating the main theorem of this section, 
we require a generalization of definition \ref{def:Lind}
and a simple fact. 

\begin{definition}
\textbf{$t$-independent set of constraints}\label{def:kind} 
\noindent 
Let $C$ be a quantum code with a set of $k$-local constraints, whose 
underlying bi-partite graph is $G(C)=(L,R;E)$. 
A set of constraints $U\subseteq R$ is said to 
be $t$-independent 
if for any $a,b\in U$ we have $\Gamma^{(2t+1)}(u) \cap \Gamma^{(2t+1)}(v) = \Phi$.
\end{definition}

The following fact can be easily derived by a greedy algorithm:

\begin{fact}\label{fact:indset}
Let $\eta = \eta(k,D_L) = k^{-(2k+1)} D_L^{-(2k-1)}$.
For any  
quantum code $C$ whose bi-partite graph $G(C)$ is left 
$D_L$-regular, and right $k$-regular, there exists a $k$-independent 
set of size at least $\eta n$. 
\end{fact}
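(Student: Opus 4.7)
The plan is to run a standard greedy construction, picking constraints one at a time while forbidding everything that could cause a collision with an already chosen constraint. The only real content is a degree-based ball-size estimate in the bi-partite interaction graph.

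First, I would rephrase the $k$-independence condition geometrically. By the triangle inequality applied in the bi-partite graph, $\Gamma^{(2k+1)}(u) \cap \Gamma^{(2k+1)}(v) = \emptyset$ is equivalent to $d(u,v) > 2(2k+1) = 4k+2$; so I need only avoid pairs of constraints whose graph distance is at most $4k+2$. Equivalently, once I select $u \in U$, I must delete from consideration the ball $B_{4k+2}(u) \cap R$ of all constraints at distance at most $4k+2$ from $u$.

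Next I would bound $|B_{4k+2}(u)\cap R|$ using the degree regularity. A BFS from the constraint $u$ crosses $k$ qudits per step to the left side, then $D_L$ constraints per step back to the right side. Hence the number of constraints at distance exactly $2j$ from $u$ is at most $k^j D_L^j$, and summing the geometric series gives
\[
|B_{4k+2}(u)\cap R| \;\le\; \sum_{j=0}^{2k+1} (kD_L)^j \;\le\; 2(kD_L)^{2k+1}.
\]
Since $mk = nD_L$, there are $m = nD_L/k$ constraints in total, so the greedy process (pick any available constraint, add it to $U$, delete its ball, repeat) terminates with
\[
|U| \;\ge\; \frac{m}{2(kD_L)^{2k+1}} \;=\; \frac{n}{2k^{2k+2}D_L^{2k}} \;\ge\; \eta n,
\]
after absorbing constants into $\eta(k,D_L)=k^{-(2k+1)}D_L^{-(2k-1)}$ (which is the looser bound stated in the fact; one can of course tighten the exponents by being careful about the $-1$ corrections in the BFS count, but the statement allows plenty of slack).

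The main—indeed only—obstacle is the bookkeeping in the ball-size estimate, and that is entirely routine: the bi-partite graph is bi-regular with known left and right degrees, so the BFS branching factor per double-step is deterministically bounded. No structural property of the code or of the graph beyond regularity is needed, which is why the fact is stated as a straightforward greedy corollary.
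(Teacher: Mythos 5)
Your approach is exactly the paper's: greedily pick a constraint, delete the ball of constraints that could conflict with it, and bound the ball size by the bi-regularity of the interaction graph. The one place where your write-up does not close is the very last inequality. You derive
\[
|U| \;\ge\; \frac{m}{2(kD_L)^{2k+1}} \;=\; \frac{n}{2\,k^{2k+2}D_L^{2k}} \;=\; \frac{\eta n}{2kD_L},
\]
which is a factor $2kD_L$ \emph{below} the stated $\eta n = n\,k^{-(2k+1)}D_L^{-(2k-1)}$, not above it; the remark that the fact's $\eta$ is "the looser bound" has the direction of the comparison backwards. So as a proof of the fact with its literal numerical value of $\eta$, the argument does not go through. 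The discrepancy comes from two bookkeeping choices: the paper deletes only $\Gamma^{(4k)}(u)$ and bounds the number of deleted constraints by the single outermost shell $(kD_L)^{2k}$, which with $mk=nD_L$ gives exactly $\eta n$; you (more conservatively, and in fact more correctly, since two balls $\Gamma^{(2k+1)}(u),\Gamma^{(2k+1)}(v)$ can intersect when $d(u,v)=4k+2$) use radius $4k+2$ and sum the whole geometric series, which costs the extra $2kD_L$.

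This is a constant-factor issue with no downstream consequence — the fact is only invoked via $\delta\le\eta/10$ and Theorem \ref{thm:sound} only needs $\delta_0=\Omega(1)$, so replacing $\eta$ by $\eta/(2kD_L)$ changes nothing of substance — but you should either state your weaker constant honestly, or match the paper's accounting (and note, if you do, that the paper's radius $4k$ appears to be off by one relative to Definition \ref{def:kind}).
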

\begin{proof}
Pick a constraint $u$, remove all constraints in 
$\Gamma^{(4k)}(u)$,  and repeat.
The number of constraints we have removed for each constraint is
$(kD_L)^{2k}$. Hence, we can proceed for $m/(kD_L)^{2k}$ steps.  
We get that the fraction of constraints is at 
least $k^{-(2k)}D_L^{-(2k)}$, 
and since $mk = n D_L$, we get the desired result. 
\end{proof}

\noindent 
{\bf Theorem (\ref{thm:sound})}
{\it Let $C$ be a stabilizer code on $n$ $d$-dimensional qudits, of minimal distance at least $k$, and a $k$-local ($k\geq 4$)
succinct generating set ${\cal G}\subset \Pi_d^n$, where the right 
degree of the interaction graph of ${\cal G}$ is $D_L$.
Then there exists a function $\gamma_{gap}=\gamma_{gap}(k)> min\left\{10^{-3},0.01/k \right\}$ 
such that for any
$\delta \leq \min\{dist(C)/2n,\eta/10\}$, 
(for $\eta$ as defined in Fact \ref{fact:indset})
we have 
$r(\delta')\leq \alpha(d) \left(1-\gamma_{gap}\right)$.
where $\delta' \in (0.99 \delta, 1.01\delta)$.
}

{~}

\noindent
The proof of the theorem will use, on one hand, claim (\ref{cl:QECC}) which upper-bounds the soundness of expanding instances, and on the other hand a lemma on non-expanding instances,which tries to "mimic" the behavior of the classical setting, in which non-expanding topologies suffer from poor soundness.
We now state this lemma:
\begin{lemma}\label{lem:indexp}
Let $C$ be a stabilizer code on $n$ 
qudits of dimension $d$, with minimal distance at least $k$
and a $k$-local ($k\geq 4$) succinct generating set ${\cal G}$, 
where the left degree of the interaction graph of ${\cal G}$ is $D_L$.
Let $\gamma_{gap} = \gamma_{gap}(k)=
min\left\{10^{-3},0.01/k \right\}$. 
If there exists a $k$-independent set $U$ of size $|U| = \delta n$,
with $\delta<dist(C)/2n$,
such that the bi-partite expansion error of $\Gamma(U)$ is at least $\eps = 0.32$, 
i.e. $|\Gamma(\Gamma(U))|= |\Gamma(U)| D_L (1-\eps')$ for some $\eps'\geq 0.32$ then
$$
r(\delta') \leq \alpha(d) \cdot (1-\gamma_{gap}),
$$
for some $\delta' \in (0.099 \delta,0.101\delta)$.
\end{lemma}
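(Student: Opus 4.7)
The plan is to construct a random Pauli error $E$ supported on the qudits in $\Gamma(U)$ and then derandomize, using the $k$-independence of $U$ to apply the Onion fact (Fact \ref{fact:succinct}) island-by-island, and using the expansion error $\eps'\geq 0.32$ to extract a multiplicative saving of $(1-\gamma_{gap})$ beyond the alphabet bound $\alpha(d)$ of Fact \ref{fact:alphabet}. The $k$-neighborhoods of the constraints in $U$ are pairwise disjoint, so each $u\in U$ can be treated as a separate ``island'' both for bounding weight modulo the centralizer and for counting violations.

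For the construction I would draw $E$ as follows: for each qudit $q\in\Gamma(U)$, independently with some probability $p$ of order $1/k$, place $MAJ(q)$ on $q$ (as in Fact \ref{fact:alphabet}), and otherwise the identity. The parameter $p$ will be tuned so that the expected weight is about $0.1\,\delta n$, matching the target $\delta'\in(0.099\delta,0.101\delta)$. Two concentration consequences then follow from Chernoff-type bounds: first, a $1-o(1)$ fraction of the islands receive at most $k/2$ errors, so the Onion fact applies to them; second, a constant fraction $\Theta(p^2k^2)$ of the islands receive at least two errors, which is the regime in which the expansion error will buy us savings.

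For the weight, I apply Fact \ref{fact:succinct} to each island carrying between $1$ and $k/2$ errored qudits: within that island's $k$-neighborhood the error cannot be reduced modulo the centralizer. Since the $k$-neighborhoods of different islands are disjoint by $k$-independence, these contributions add up, and with high probability $wt_{\mathbf{Z}({\cal G})}(E)\geq \delta'n$ for some $\delta'$ in the required range. For the violation count I use expansion: $|\Gamma(\Gamma(U))|\leq(1-\eps')|\Gamma(U)|D_L$ implies that a constant fraction of the $|\Gamma(U)|D_L$ edges from $\Gamma(U)$ hit generators that touch $\Gamma(U)$ in at least two distinct positions. On each such multi-touching generator $g$, whenever two or more of its qudits in $\Gamma(U)$ carry errors, the product of the MAJ Paulis gives $g$ an extra chance to commute with it, beyond the single-qudit commutation probability $t(d)=1-\alpha(d)$ that drives Fact \ref{fact:alphabet}. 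A careful Pauli-algebra calculation should then yield an expected number of violated generators of at most $\alpha(d)(1-\gamma_{gap})\cdot k\cdot wt_{\mathbf{Z}({\cal G})}(E)$, and derandomizing produces a concrete $E$ witnessing $r(\delta')\leq\alpha(d)(1-\gamma_{gap})$.

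The hard part will be making the violation analysis quantitative. Expansion alone merely guarantees that multi-touching generators exist; converting their existence into a provable saving requires controlling the joint distribution of the errors on correlated qudits of a common generator and, crucially, not losing anything on the single-touching side of the budget on which the baseline $\alpha(d)$ bound rests. The specific dependence $\gamma_{gap}\geq\min\{10^{-3},0.01/k\}$ must fall out of the same estimate, so the tuning of $p$ has to be done carefully to balance the Onion regime (few islands with more than $k/2$ errors) against the expansion regime (enough islands with at least two errors for the multi-touching savings to overcome any anti-savings from correlated single-qudit effects).
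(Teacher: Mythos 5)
Your overall architecture matches the paper's: a random Pauli error supported on $S=\Gamma(U)$ with per-qudit error probability $p=\Theta(1/k)$ tuned so the expected weight is $\approx 0.1\delta n$; the Onion fact applied island-by-island (legitimate, since $k$-independence makes the $\Gamma^{(2k+1)}$-neighborhoods disjoint) plus Chernoff to show the weight modulo $\mathbf{Z}({\cal G})$ concentrates near $p|S|$; expansion to shave a constant factor off the penalty; and a final averaging/derandomization step. The weight half of your sketch is essentially the paper's Fact \ref{fact:weight}.

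The gap is in the penalty analysis, which is the crux of the lemma and which you explicitly defer to ``a careful Pauli-algebra calculation.'' Two problems. First, the mechanism you invoke for the saving --- that on a multi-touched generator ``the product of the MAJ Paulis gives $g$ an extra chance to commute'' --- is not what makes the argument work; the paper's saving is pure inclusion--exclusion: a constraint hit by two anticommuting erred qudits is counted once rather than twice, so one subtracts $\sum_{e\notin E_{inj}}\mathbf{E}[x(e_{inj}(c(e)))\,x(e)]$, and the excess edge count $|E\setminus E_{inj}|=\eps|S|D_L$ supplied by the expansion hypothesis turns this into a guaranteed deduction of $\eps|S|D_L(p\alpha)^2$. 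Second, and more seriously, your choice of $MAJ(q)$ (rather than the paper's \emph{uniformly random} non-identity Pauli on each erred qudit) breaks exactly this deduction: conditioned on the support, the indicator that edge $e=(i,c)$ is ``anticommuting'' is deterministic, and an adversarial code can arrange that every multi-touching constraint has at most one incident qudit whose $MAJ$ anticommutes with its restriction --- for $d=2$ the per-qudit commuting budget $t(d)D_L=D_L/3$ exceeds the average excess degree $\eps D_L=0.32 D_L$, so this arrangement is consistent with every qudit still having the maximal $\alpha(d)D_L$ anticommuting constraints. In that configuration the pairwise correction term is identically zero and your expected penalty bound stalls at the alphabet baseline $p\alpha|S|D_L$, so no $\gamma_{gap}$ is extracted. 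The fix is to randomize the Pauli label as well, so that each edge is violated independently with probability exactly $p(1-t(d))$ and the product expectation factorizes; with that change the quantitative bookkeeping ($\gamma_{gap}\geq\min\{10^{-3},0.01/k\}$, the $y(k)$ loss from islands exceeding $k/2$ errors, and the conditioning that fixes $\delta'$ in the stated window) goes through as in the paper.
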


The proof of this Lemma is technically non-trivial, and we defer it to a 
separate section. 
From this lemma, it is easy to show theorem (\ref{thm:sound}):
\begin{proof}(of theorem \ref{thm:sound})
The parameters of the theorem allow us to apply directly fact (\ref{fact:indset});
hence there exists a $k$-independent set $S$ of size at least $\eta n$, for $\eta$ as defined in Fact (\ref{fact:indset}).
Since $\delta \leq \eta/10$ there exists a $k$-independent set $S$ of size $10 \delta$.
Now, either:
\begin{enumerate}

\item
$S$ has expansion error at least $0.32$.
By lemma (\ref{lem:indexp}), we have
$$
r(\mu) < \alpha(d)(1-\gamma_{gap}),
$$
for some $\mu \in (0.099 \cdot (10 \delta),0.101 \cdot (10\delta)) = (0.99 \delta,1.01\delta)$,
and $\gamma_{gap}(k)$ from lemma (\ref{lem:indexp}), 
which is at least $ min\left\{10^{-3},0.01/k \right\}$.

\item
The set $S$ is $\epsilon$-expanding for $\epsilon< 0.32$.  In which case,
since $S$ is in particular $R$-independent, then by claim (\ref{cl:QECC}), 
the soundness $r(\delta')\le 2\eps < 2/3 - 0.01 \leq \alpha(d) -0.01$, for all $\delta' \leq |S|/n$.  In particular $r(\mu) < \alpha(d)(1-0.01/k)$.

\end{enumerate}
Taking the higher of these two bounds we get the desired upper-bound for $r(\mu)$. 
\end{proof}

\subsection{Proof of Lemma (\ref{lem:indexp})}
In the following we first define the error; 
We provide the proof that the expected penalty of this error is small in fact
(\ref{fact:penalty}), then state and prove the Onion fact in sub-subsection 
\ref{sec:onion} and use it to prove Fact (\ref{fact:weight}),
in which we show that the error has large weight modulo the group.
Finally we combine all the above to finish the proof of the lemma. 

\subsubsection{Constructing the error}\label{sec:err}
Let $U\subseteq R$ be a $k$-independent set as promised by the conditions of the lemma.
Then $|U| = \delta n$, and denoting $S=\Gamma(U)$, we have that $|S| = \delta n k$.
Therefore, $|\Gamma(S)| = |S| D_L (1-\eps')$, for some $\eps' \geq 0.32$.
Let ${\cal E}$ be the following random error process: for each qudit of $S$ independently, we apply $I$ w.p. $1-p$ for $p=1/(10k)$, and one of the other elements of $\Pi_d$ with equal probability $p \cdot t(d)$, where $t$ is defined in 
Definition (\ref{def:single}).

$$
{\cal E} = \bigotimes_{i\in S} {\cal E}_i
\mbox{,  where  }
{\cal E}_i =
\left\{
	\begin{array}{ll}
		I_i  & \mbox{w.p. } 1-1/(10k) \\
		X_d^k P_d^l & \mbox{w.p. } t/(10k)
	\end{array}
\right.
$$
We note here that the choice of $p$ is such that on average, each $k$-tuple has only a small number of errors; the expectation of the number of errors is an absolute constant $1/10$ (not a fraction of $k$). 
This will help, later on, to lower-bound the weight of the error modulo the group.

\subsubsection{Analyzing Penalty}
We first claim, that on average, ${\cal E}$ has a relatively 
small penalty w.r.t. ${\cal G}$, using the fact that the expansion error 
is at least $0.32$ as in the condition of Lemma \ref{lem:indexp}.
For any ${\cal E}$, let $penalty({\cal E})$ denote the number of generators of ${\cal G}$ that do not commute with ${\cal E}$.
\begin{fact}\label{fact:penalty}
$$
\mathbf{E}_{\cal E}\left[Penalty({\cal E})\right] 
\leq 
p \alpha |S| D_L \left(1- 0.02/k \right)
$$
\end{fact}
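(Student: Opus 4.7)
The plan is to upper-bound $\mathbf{E}[Penalty({\cal E})]=\sum_{g\in{\cal G}}\Pr[g\text{ anticommutes with }{\cal E}]$ by linearity, compute $\Pr[g\text{ violated}]$ explicitly for each generator as a function of $k_g := |\Gamma(g)\cap S|$, and then extract the $(1-0.02/k)$ factor from the expansion gap $\eps'\ge 0.32$. Only generators in $\Gamma(S)$ contribute, and it is convenient to split $\Gamma(S)=\Gamma_1(S)\sqcup\Gamma_{\ge 2}(S)$ according to whether $k_g=1$ or $k_g\ge 2$.

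The first step is a one-qudit commutator calculation. From the generalised Pauli commutation rule $(X_d^aP_d^b)(X_d^cP_d^\ell)=\omega^{bc-a\ell}(X_d^cP_d^\ell)(X_d^aP_d^b)$, the centralizer in $\Pi_d$ of any non-identity element has size exactly $d$, and the remaining $d(d-1)$ non-identity Paulis are equi-distributed over the $(d-1)$ non-trivial commutator values. Translating by the error density $p$, the single-qudit commutator exponent $c_{g,q}\in\Z/d\Z$ therefore satisfies $\Pr[c_{g,q}=0]=1-pd/(d+1)$ and $\Pr[c_{g,q}=j]=pd/(d^2-1)$ for $j\ne 0$. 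Independence across the $k_g$ relevant qudits combined with Fourier inversion on $\Z/d\Z$ then gives the closed form
\[
\Pr[g\text{ violated}]=\tfrac{d-1}{d}\bigl(1-\beta^{k_g}\bigr),\qquad \beta:=1-\tfrac{pd^2}{d^2-1}.
\]
In particular, every $g\in\Gamma_1(S)$ contributes exactly $pd/(d+1)$, which is already at most $p\alpha(d)$ since $d+1\le d^2-1$.

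The second step converts the concavity of $k\mapsto 1-\beta^k$ into a multiplicative saving. Telescoping $1-\beta^{k_g}=(1-\beta)\sum_{i=0}^{k_g-1}\beta^i$ and using $\beta^i\le 1$ yields the elementary inequality $1-\beta^{k_g}\le k_g(1-\beta)-(k_g-1)(1-\beta)^2$. Summing over $g\in\Gamma(S)$ and applying the two edge-count identities $\sum_g k_g=|S|D_L$ and $\sum_g(k_g-1)=|S|D_L-|\Gamma(S)|=\eps'|S|D_L$ (the latter being exactly the expansion defect of $S$) gives
\[
\mathbf{E}[Penalty({\cal E})]\ \le\ |S|D_L\left(\tfrac{pd}{d+1}-\tfrac{(d-1)(1-\beta)^2\eps'}{d}\right).
\]

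Finally, I compare this with $p\alpha(d)|S|D_L(1-0.02/k)$ by elementary algebra. For $d\ge 3$ the head term $pd/(d+1)$ is already strictly below $p\alpha(d)$ with room to spare; for $d=2$, however, the two quantities coincide at $2p/3$, so the whole $0.02/k$ slack must come from the Fourier-saving term $\tfrac12(1-\beta)^2\eps'=\tfrac12(4p/3)^2\cdot 0.32$, and plugging in $p=1/(10k)$ turns the desired inequality into a one-line verification. The main obstacle is really the single-qudit commutator distribution — showing exact equi-distribution of the non-trivial commutator values among non-identity Paulis in $\Pi_d$ — and its propagation through independence to the closed-form expression for $\Pr[g\text{ violated}]$; once that is established, everything else is edge counting and bookkeeping.
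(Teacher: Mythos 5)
Your proof is essentially correct but follows a genuinely different route from the paper's. The paper never computes the exact violation probability of a constraint: it bounds the penalty pointwise by the number of constraints incident to at least one non\mbox{-}commuting qudit, writes this via an injective edge set $E_{inj}$ as $\sum_{e\in E_{inj}}x(e)+\sum_{e\notin E_{inj}}(1-x(e_{inj}(c(e))))x(e)$, and gains the factor $(1-\Theta(p\eps'))$ from the second\mbox{-}moment term $\sum_{e\notin E_{inj}}\mathbf{E}[x(e_{inj})x(e)]=\eps'|S|D_L\,q^2$ supplied by independence and the expansion defect. You instead compute the per\mbox{-}constraint violation probability in closed form, $\frac{d-1}{d}(1-\beta^{k_g})$ with $\beta=1-\frac{pd^2}{d^2-1}$, then linearize using $1-\beta^{k_g}\le k_g(1-\beta)-(k_g-1)(1-\beta)^2$ and the two edge counts $\sum_g k_g=|S|D_L$, $\sum_g(k_g-1)=\eps'|S|D_L$. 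Your version additionally exploits the phase\mbox{-}cancellation effect (two anticommuting qudits can leave the constraint unviolated), which the paper explicitly mentions but declines to use; as a result your head term $\frac{pd}{d+1}$ is strictly below the paper's $p\alpha(d)$ for $d\ge3$ and your relative saving $(1-\beta)\eps'$ slightly exceeds the paper's $p\alpha\eps'$, so your bound is in fact tighter. The closing arithmetic checks out: the correction term alone gives $(d^2-2)\cdot\frac{0.02}{k}\le\frac{0.032\,d^3}{k(d+1)}$ for every $d\ge2$ once $p=\frac{1}{10k}$ and $\eps'\ge0.32$ are substituted, so you do not even need the head\mbox{-}term slack for $d\ge 3$ (which is just as well, since for $d\gg k$ that slack is smaller than $0.02/k$ and would not carry the bound by itself, contrary to your ``room to spare'' remark).

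One claim in your Step 1 is false as stated and should be repaired: for composite $d$ the centralizer in $\Pi_d$ of a non\mbox{-}identity element $X_d^aP_d^b$ has size $dg$ with $g=\gcd(a,b,d)$, which exceeds $d$ when $g>1$, and the nontrivial commutator values are then supported only on the subgroup $g\Z/d\Z$ rather than being equidistributed over all of $\Z/d\Z$. Your closed form is therefore exact only for prime $d$. The inequality you actually need survives, however: writing $\hat\mu_q(\chi)=\mathbf{E}[\omega_d^{\chi(b_qc-a_q\ell)}]$, the full character sum over $(\Z/d)^2$ forces $\hat\mu_q(\chi)\in\{1,\beta\}$ for every $\chi$, hence $\Pr[\textstyle\sum_q c_{g,q}=0]=\frac1d\sum_\chi\prod_q\hat\mu_q(\chi)\ge\frac1d\bigl(1+(d-1)\beta^{k_g}\bigr)$ and $\Pr[g\ \mathrm{violated}]\le\frac{d-1}{d}(1-\beta^{k_g})$ for all $d$. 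State the formula as an upper bound derived this way and the argument is complete.
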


\begin{proof}
Let $G=(L,R;E)$ 
denote the bi-partite graph corresponding to ${\cal G}$, 
with $R$ being the generators of ${\cal G}$ and $L$ the qudits. 
Let $S=\Gamma(U)$ be as before.
Let the error process ${\cal E}$ be the one defined above.
For any constraint $c\in \Gamma(S)$ which is violated when applied 
to this error, observe that there must be a qudit
$i\in supp(c)$ such that   
$\left[c|_i, {\cal E}_i\right] \neq 0$.  
We now would like to bound the number of constraints violated by ${\cal E}$ 
using this observation, and linearity of expectation. 

For an edge $e\in E$ connecting a qudit $i$ in $S$ and a constraint $c$ in 
$\Gamma(S)$,  
let $x(e)$ denote the binary variable which is $1$, iff the error 
term ${\cal E}_i$ on does not commute with $c|_i$.
In other words, an edge marked by $1$ is an edge whose qudit 
causes its constraint to be violated. 
By construction, for each $e\in E$ which 
connects the qudit $i$ and the constraint $c$ we have 
\begin{equation}
\label{eq:singleexp}
\mathbf{E}_{\cal E}[x(e)] = p(1-t).
\end{equation} 
This is true since a constraint $c$ restricted to the qudit $i$, 
$c|_i$ does not commute with the error restricted to the same qudit $i$,  
${\cal E}_i$, iff both ${\cal E}_i$ is non-identity (which happens with 
probability $p$) and
is not equal to $c|_i$. 

If we had just added now $x(e)$ over all edges going out of $S$ 
(whose number is $|S| D_L$), then by linearity 
of expectation, this would have given an upper bound on the expected 
number of violated constraint equal to 

\begin{equation}\label{eq:sumexp}
\sum_e p(1-t)=p |S| D_L \alpha(d). 
\end{equation}

Unfortunately this upper bound does not suffice; to strengthen it 
we would now like to take advantage of the fact that 
many of those edges go to the same constraint, due to the fact that the 
expansion is bad; thus, instead of simply summing these expectation values, 
we take advantage of the fact that two qudits touching the same 
constraint
cannot contribute twice to its violation. 
Observe that it may even be the case that some edges may cause 
constraints to become "unviolated", so the actual bound may be even lower.

Let $E_{inj} \subseteq E$ be a subset of the edges between $S$ to $\Gamma(S)$ 
chosen by picking a single edge for each constraint in $\Gamma(S)$. 
For an edge $e\in E$ let $c(e)$ denote the constraint incident on $e$, 
and let $e_{inj}(c(e))$ denote the edge
in $E_{inj}$ that is connected to $c(e)$.

We now bound the expectation by subtracting $x(e)$ from the sum, 
if the Boolean variable 
$x(e_{inj}(c(e)))$ is $1$; this avoids counting the violation of the same 
constraint twice due to the two edges. 
We have: 
$$
\mathbf{E}_{\cal E}\left[Penalty\right] 
\leq 
\mathbf{E}_{\cal E}\left[
\sum_{e\in E_{inj}} x(e) +
\sum_{e \notin E_{inj}} 
\left(1 - x(e_{inj}(c(e))) \right) \cdot x(e)
\right].
$$
Expanding the above by linearity of expectation:
$$
\mathbf{E}\left[Penalty\right] 
\leq 
\sum_{e\in E_{inj}} \mathbf{E}_{\cal E}\left[ x(e) \right] +
\sum_{e \notin E_{inj}} \mathbf{E}_{\cal E}\left[x(e)\right] - 
\sum_{e \notin E_{inj}} \mathbf{E}_{\cal E}\left[x(e_{inj}(c(e))) \cdot x(e) \right]=
$$
$$
\sum_{e\in E} \mathbf{E}_{\cal E}\left[ x(e) \right]+
\sum_{e \notin E_{inj}} \mathbf{E}_{\cal E}\left[x(e_{inj}(c(e))) \cdot x(e) \right]. 
$$

We have already calculated the first term in the sum in Equation 
\ref{eq:sumexp};  
We now lower bound the correction given by the second term. 
We use the fact that  for any $e\notin E_{inj}$
$$
\mathbf{E}_{\cal E}\left[x(e_{inj}(c(e)) x(e)\right] = 
\mathbf{E}_{\cal E}[x(e_{inj}(c(e)))]
 \mathbf{E}_{\cal E}[x(e)] $$
since ${\cal E}$ is independent between different qudits. 
We can thus substitute 
Equation \ref{eq:singleexp}, and get: 

$$
\mathbf{E}_{\cal E}\left[Penalty\right] 
\leq 
p \alpha |S| D_L  - |S| D_L \eps (p \alpha)^2.
$$
where we have used the fact that $|E \backslash E_{inj}| = |S| D_L \eps$.
This is equal to 
$$
p\alpha |S| D_L ( 1 - p\alpha \eps). 
$$
Using $p = 1/(10k),\eps \geq 0.32$, 
$\alpha(d) \geq 2/3$, we get the desired bound.
\end{proof}

\subsubsection{The Onion fact (\ref{fact:succinct})}\label{sec:onion}

\begin{fact}\label{fact:succinct}
\textbf{Onion fact}

\noindent
Let $C$ 
be a stabilizer code on $n$ qudits 
with a succinct generating set ${\cal G}$ of locality $k$, such that $dist(C)\geq k$.
Let $E\in \Pi_d^n$ s.t. 
$supp(E) \subseteq \Gamma(u)$ for some generator $u\in {\cal G}$. 
Finally let $\Delta\in A({\cal G})$, and let
$E_{\cal G}=\Delta\cdot E$. 
Then, for any $i\in [k]$, if 
$wt(E|_{\Gamma(u)}) = i$, then 
$wt(E_{\cal G}|_{\Gamma^{(2k+1)}(u)}) \geq min\left\{i,k-i\right\}$.
\end{fact}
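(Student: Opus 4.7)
The plan is to conduct a hybrid peeling across the onion shells $\Gamma^{(j)}(u)$ surrounding $u$, combining the abelian structure of $A(\mathcal{G})$ with succinctness of $\mathcal{G}$ to force a lower bound at some intermediate shell that then propagates outward by monotonicity. Since $A(\mathcal{G})$ is abelian, one can group the generators appearing in an expansion of $\Delta$ by their \emph{depth} relative to $u$: a generator $g$ has depth $j$ iff $g\in\Gamma^{(2j+2)}(u)\setminus\Gamma^{(2j)}(u)$, and its support then lies in $\Gamma^{(2j+3)}(u)\setminus\Gamma^{(2j-1)}(u)$. Hence the truncation $N^{(\ell)}:=\prod_{j\le\ell}\Delta_{j}$ (with $\Delta_{j}$ the abelian product of depth-$j$ factors of $\Delta$) has support in $\Gamma^{(2\ell+3)}(u)$, while all depth-$\ge\ell+1$ factors act as the identity on $\Gamma^{(2\ell+1)}(u)$. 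This gives the key reduction $E_{\mathcal{G}}|_{\Gamma^{(2\ell+1)}(u)}=(N^{(\ell)}E)|_{\Gamma^{(2\ell+1)}(u)}$ for every $\ell$, so it suffices to prove the claim for $\Delta$ replaced by $N^{(k)}$.

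Next I would case-split on the smallest index $\ell^{*}\le k$ such that $N^{(\ell^{*})}\neq I$. If no such $\ell^{*}$ exists, then $\Delta|_{\Gamma^{(2k+1)}(u)}=I$ and $E_{\mathcal{G}}|_{\Gamma^{(2k+1)}(u)}=E$, of weight $i\ge\min(i,k-i)$. Otherwise $N^{(\ell^{*})}$ is a nontrivial element of $A(\mathcal{G})$ and succinctness forces $wt(N^{(\ell^{*})})\ge k$. Decompose this weight as $\alpha:=wt(N^{(\ell^{*})}|_{\Gamma(u)})$, $\beta:=wt(N^{(\ell^{*})}|_{\Gamma^{(2\ell^{*}+1)}(u)\setminus\Gamma(u)})$, and a spilled part $\gamma$ in the outer shell $\Gamma^{(2\ell^{*}+3)}(u)\setminus\Gamma^{(2\ell^{*}+1)}(u)$, so that $\alpha+\beta+\gamma\ge k$. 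The elementary Pauli inequality $wt(PQ)\ge|wt(P)-wt(Q)|$ applied to $N^{(\ell^{*})}|_{\Gamma(u)}\cdot E$ gives
\[
wt\bigl(E_{\mathcal{G}}|_{\Gamma^{(2\ell^{*}+1)}(u)}\bigr)\;\ge\;|\alpha-i|+\beta,
\]
and combining with $\alpha+\beta\ge k-\gamma$ via a short case analysis on $\alpha\le i$ versus $\alpha>i$ already yields $|\alpha-i|+\beta\ge\min(i,k-i)$ whenever $\gamma=0$.

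The technical heart is therefore controlling the spilled weight $\gamma$, which can arise only from the outermost peeled layer $\Delta_{\ell^{*}}$. To handle it I would apply succinctness separately to $\Delta_{\ell^{*}}$ itself (which is in $A(\mathcal{G})$, hence has weight $\ge k$ unless trivial), and use the geometric fact that every depth-$\ell^{*}$ generator must contribute at least one qudit to the inner boundary ring $\Gamma^{(2\ell^{*}+1)}(u)\setminus\Gamma^{(2\ell^{*}-1)}(u)$. A careful bookkeeping of how the ring weight of $\Delta_{\ell^{*}}$ can partially cancel, but not fully absorb, the ring contribution of $N^{(\ell^{*}-1)}$ when assembling $N^{(\ell^{*})}$ upgrades the previous bound to $\min(i,k-i)$ for $\Gamma^{(2\ell^{*}+1)}(u)\subseteq\Gamma^{(2k+1)}(u)$, and monotonicity of weight under taking larger supports transfers the conclusion to $\Gamma^{(2k+1)}(u)$.

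The main obstacle is exactly this interaction between the inner cancellation, controlled by succinctness of $N^{(\ell^{*}-1)}$, and the outermost layer $\Delta_{\ell^{*}}$, whose ring restriction can adversarially cancel part of the inner weight. The trick will be to apply succinctness to \emph{both} $N^{(\ell^{*}-1)}$ and $\Delta_{\ell^{*}}$ (rather than merely to the full $\Delta$), so that a ``large'' spilled contribution of one forces a compensating visible contribution of the other, guaranteeing at least $\min(i,k-i)$ surviving weight inside $\Gamma^{(2k+1)}(u)$.
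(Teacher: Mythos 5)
Your setup --- the depth-graded factorization of $\Delta$, the identity $E_{\cal G}|_{\Gamma^{(2\ell+1)}(u)}=(N^{(\ell)}E)|_{\Gamma^{(2\ell+1)}(u)}$, and the triangle-inequality bound $|\alpha-i|+\beta$ --- is sound, and the cases $\ell^*\ge 1$ (where $\alpha=0$, so the bound is already $\ge i$) and $\gamma=0$ go through. But the one case that carries all of the difficulty, namely $\ell^*=0$ with $\gamma>0$ (the product $\Delta_0$ of the generators adjacent to $u$ places part of its mandatory weight $k$ in the ring $\Gamma^{(3)}(u)\setminus\Gamma^{(1)}(u)$), is exactly the case your sketch does not close. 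There the middle annulus is empty, so $\beta=0$, and $\alpha$ may equal $i$, making your bound on $\Gamma^{(2\ell^*+1)}(u)$ vacuous; ``monotonicity'' then gives nothing, because on the larger set $\Gamma^{(2k+1)}(u)$ the deeper factors $\Delta_1,\Delta_2,\dots$ re-enter and may cancel the spilled weight $\gamma$. Note also that for $\ell^*=0$ the operator $N^{(\ell^*-1)}$ is trivial, so the proposed two-factor trick (playing succinctness of $N^{(\ell^*-1)}$ against that of $\Delta_{\ell^*}$) degenerates; and, more fundamentally, no two-factor argument can rule out a cascade in which each layer's spill is absorbed by the next layer and the residue migrates outward through all $k$ shells. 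A correct completion must quantify over all layers simultaneously.

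The paper does precisely that with a single pigeonhole step in place of layer-by-layer bookkeeping: if $\Delta|_{\Gamma(u)}\neq I$, then $wt\left(\Delta|_{\Gamma^{(2k+1)}(u)}\right)\ge k$. Indeed, if the total weight of $\Delta$ inside $\Gamma^{(2k+1)}(u)$ were less than $k$, one of the $k$ onion shells would carry no support of $\Delta$; truncating $\Delta$ at that empty shell yields an operator $\tilde\Delta$ of weight $<k$ that still commutes with every generator (any generator straddling the empty shell has all its qudits in a region where $\Delta$ and $\tilde\Delta$ agree, precisely because the shell is empty), so $\tilde\Delta\in\mathbf{Z}({\cal G})$; this contradicts succinctness if $\tilde\Delta\in A({\cal G})$ and contradicts $dist(C)\ge k$ otherwise. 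The bound $wt\left(E_{\cal G}|_{\Gamma^{(2k+1)}(u)}\right)\ge k-i$ then follows from one application of the triangle inequality to the whole of $\Delta$, with no need to track how the cancellation distributes among the $\Delta_j$. I recommend replacing the $\gamma$-bookkeeping with this global pigeonhole argument.
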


\begin{proof}
If $\Delta|_{\Gamma(u)}=I$ then 
\begin{equation}\label{eq:i} 
wt \left(E_{\cal G}|_{\Gamma^{(2k+1)}(u)}\right) \ge 
wt \left(E_{\cal G}|_{\Gamma(u)}\right) =
wt \left(E|_{\Gamma(u)}\right) = i,
\end{equation}
so in this case we are done. 

Otherwise, $\Delta|_{\Gamma(u)}$ 
is non-identity, and so has at least one non-identity 
coordinate. Since $\Delta$ is non-identity,  
by the assumption on the succinctness of ${\cal G}$ we 
have $wt(\Delta) \geq k$.

Moreover, we claim that $wt \left(\Delta|_{\Gamma^{(2k+1)}(u)}\right) \geq k$.
Otherwise, consider the following process. 
Start with the generator $u$, and consider the qudits in $\Gamma(u)$. 
Now add the qudits in $\Gamma^{(3)}(u)$ (namely the qudits that are acted 
upon by generators intersecting $u$).   
Then add the next level, and so on for $k$ levels, 
by which point we have added all qudits belonging 
to $\Gamma^{(2k+1)}(u)$. 
By the pigeonhole principle, if $wt \left(\Delta|_{\Gamma^{(2k+1)}(u)}\right)<k$, 
then there must exist a level $t$, $1\le t \le k$, such that
$\Delta$ has zero support on qudits added in this level. 

We now claim that ${\tilde \Delta}=\Delta|_{\Gamma^{(2(t-1)+1)}(u)}$, 
is in the centralizer $\mathbf{Z}({\cal G})$ but its weight is less than $k$.
This, together with the fact that ${\tilde \Delta} \notin A({\cal G})$, 
shown in the next paragraph, contradicts the assumption that $dist(C)\geq k$.
To see that ${\tilde \Delta}$ is in the centralizer, 
we observe first that $\Delta$ commutes with all elements of ${\cal G}$ 
that act only on qudits in ${\Gamma^{(t-1)}(u)}$, 
and since ${\tilde \Delta}$ agrees with $\Delta$ on
${\Gamma^{(2(t-1)+1)}(u)}$, ${\tilde \Delta}$ also commutes with them. 
We also observe that ${\tilde \Delta}$ trivially commutes with all 
elements in ${\cal G}$ whose support does not intersect 
$\Gamma^{(2(t-1)+1)}(u)$. Hence we only need to worry about 
those terms that act on at least one qudit in 
$\Gamma^{(2t+1)}(u)-\Gamma^{(2(t-1)+1)}(u)$ 
and at least one qudit in  $\Gamma^{(2(t-1)+1)}(u)$.  
Let $v$ be some such term. 
Note that $v$ does not act on any qudit outside $\Gamma^{(2t+1)}(u)$ 
by definition.  
We know that $\Delta$ commutes with $v$.  
But by the choice of $t$, we know that $\Delta$ is trivial on those qudits 
added at the $t$-th level, and hence $\Delta$ restricted to 
$\Gamma^{(2t+1)}(u)$ (which contains the qudits of $v$)
is the same as $\Delta$ restricted to $\Gamma^{(2(t-1)+1)}(u)$. 
And so $\Delta$ restricted to $\Gamma^{(2(t-1)+1)}(u)$ commutes with $v$. 
\lnote{The above paragraph could be significantly shortened: 
By the property of $\Delta$ and 
definition of $\Gamma^{(i)}(u)$, we have that for any $g\in {\cal G}$
$[g|_B, \Delta|_B] = [g,\Delta]=0$, where $B = \Gamma^{(t-1)}(u)$.
Hence $[g,\tilde \Delta] = [g|_B, \tilde \Delta|_B] = [g|_B, \Delta|_B] = 0$.}

We showed that ${\tilde \Delta}$ is in $\mathbf{Z}({\cal G})$.
If it also belongs to $A({\cal G})$, this contradicts succinctness 
of ${\cal G}$;  
otherwise it is in $\mathbf{Z}({\cal G})-A({\cal G})$ 
implying the distance of $C$ 
is at most $k-1$, contrary to assumption.
This means that $wt \left(\Delta|_{\Gamma^{(2k+1)}(u)}\right) \geq k$.
Therefore, we now know 
by the triangle inequality on the Hamming distance, that 
\begin{equation}\label{eq:k-i}
wt \left(E_{\cal G}|_{{\Gamma^{(2k+1)}(u)}}\right) \geq 
wt \left(\Delta|_{{\Gamma^{(2k+1)}(u)}}\right) - 
wt \left(E|_{{\Gamma^{(2k+1)}(u)}}\right)
=
\end{equation}
$$
wt \left(\Delta|_{{\Gamma^{(2k+1)}(u)}}\right) - 
wt \left(E|_{\Gamma(u)}\right)
\geq
k-i.
$$
Taking the minimal of the bounds from Equations (\ref{eq:i}),(\ref{eq:k-i})
 completes the proof.
\end{proof}

\subsubsection{Analyzing error weight}
We note that the expected weight of ${\cal E}$ is 
$p|S|$ and since $|S|$ is linear in $n$, 
by Chernoff  
the probability that the weight of ${\cal E}$ is smaller by more than
than a constant fraction than this expectation is $2^{-\Omega(n)}$.
We need to show a similar bound on the weight modulo the centralizer group; 
given that $\delta<dist(C)/2n$ we only need to bound the weight modulo 
$A({\cal G})$.  
Let $\Delta\in A$ be some element in the stabilizer group 
and let ${\cal E}_{\cal G}=\Delta \cdot {\cal E}$. 
We now need to lower-bound $wt({\cal E}_{\cal G})$.

\begin{fact}\label{fact:weight}
For integer $k$, let $\hat{k} = \floor{k/2}+1$.
Let $y(k): [4,\infty] \mapsto \mathbf{R}$ be the function:
$$
y(k) =
\left\{
	\begin{array}{ll}
		1-2^{(-\hat{k}+1)log(k)+k-2.3\hat{k}+4.54}  & k\geq 12 \\		
   	     0.9999 & 6\leq k\leq 11 \\
		0.9992 & k=5 \\
		0.9985 & k=4 
	\end{array}
\right.
$$
We claim:
$$
Prob_{\cal E} 
\left( 
wt({\cal E}_{\cal G}) < 
|S| p y(k)  
\right)
= 
2^{-\Omega(n)}.$$
\end{fact}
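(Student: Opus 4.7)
The plan is to decompose the random error ${\cal E}$ into disjoint ``islands'' indexed by $U$, apply the Onion fact (Fact \ref{fact:succinct}) within each island to get a pointwise-in-$\Delta$ lower bound on weight, and then use independence of the islands to concentrate via Chernoff.

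First, I would use the $k$-independence of $U$: the neighborhoods $\Gamma^{(2k+1)}(u)$ for $u\in U$ are pairwise disjoint, and within each $\Gamma^{(2k+1)}(u)$ the only qudits that can carry non-identity components of ${\cal E}$ lie inside $\Gamma(u)$ (since $\mathrm{supp}({\cal E})\subseteq S=\Gamma(U)$). Hence for \emph{every} stabilizer element $\Delta\in A({\cal G})$,
$$wt({\cal E}_{\cal G}) \;\geq\; \sum_{u\in U} wt\bigl((\Delta\cdot {\cal E})\big|_{\Gamma^{(2k+1)}(u)}\bigr).$$
Setting $X_u := wt({\cal E}|_{\Gamma(u)})$, the Onion fact applied within each island yields
$$wt\bigl((\Delta\cdot {\cal E})\big|_{\Gamma^{(2k+1)}(u)}\bigr) \;\geq\; \min\{X_u,\,k-X_u\} \;=:\; Y_u,$$
and crucially this bound is uniform over all $\Delta\in A({\cal G})$. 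Combined with the fact that $\delta < dist(C)/(2n)$ (so $wt_{\mathbf{Z}({\cal G})}$ coincides with $wt_{A({\cal G})}$), this gives $wt_{A({\cal G})}({\cal E}) \geq \sum_u Y_u$ deterministically in ${\cal E}$.

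Second, since the islands $\Gamma(u)$ are disjoint and ${\cal E}$ acts independently on each qudit, the variables $\{X_u\}_{u\in U}$ are i.i.d.\ Binomial$(k,p)$ with $p=1/(10k)$. Consequently the $Y_u$ are i.i.d., bounded in $[0,\lfloor k/2 \rfloor]$. I would bound $\mu_k := \mathbf{E}[Y_u]$ from below by writing $Y_u = X_u - (2X_u-k)\mathbf{1}[X_u\geq\hat{k}]$ with $\hat{k}=\lfloor k/2\rfloor+1$, which gives
$$\mu_k \;\geq\; kp - k\cdot\Pr[X_u\geq \hat{k}],$$
and then estimating the tail by the standard binomial bound $\Pr[X_u\geq\hat{k}]\leq \binom{k}{\hat{k}}p^{\hat{k}}$. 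Plugging in $p=1/(10k)$ and simplifying yields $\mu_k\geq kp\cdot y(k)$ with exactly the closed form in the statement for $k\geq 12$; for the small cases $4\leq k\leq 11$ I would simply evaluate the same binomial sum numerically to produce the tabulated constants.

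Third, concentration: $\sum_{u\in U} Y_u$ is a sum of $|U|=\delta n$ independent variables bounded in $[0,k/2]$ with mean $|S|\,p\,y(k)$, and $k$ is a constant. A multiplicative Chernoff bound then gives
$$\Pr\!\left(\sum_{u\in U} Y_u < |S|\,p\,y(k)\right) \;\leq\; 2^{-\Omega(n)},$$
which together with the uniform-in-$\Delta$ inequality above establishes the claim.

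The main obstacle is the quantitative estimate of $y(k)$, i.e.\ controlling how much the ``heavy'' configurations $X_u\geq \hat{k}$ dilute the naive mean $kp$. The $\min\{X_u,k-X_u\}$ bound from the Onion fact is intrinsically tight -- islands with more than $k/2$ errors can genuinely be shortened modulo $A({\cal G})$ -- so the slack $1-y(k)$ must absorb their full contribution. The reason this is affordable is the deliberate choice $p=1/(10k)$, which makes the expected island load $kp=1/10$ a small absolute constant so that $\Pr[X_u\geq \hat{k}]$ decays geometrically in $\hat{k}$; the precise constants appearing in $y(k)$ come only from sharpening the binomial tail bound, not from any further structural input.
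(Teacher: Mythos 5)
Your proposal is correct and follows essentially the same route as the paper: decompose into disjoint islands via $k$-independence, apply the Onion fact uniformly in $\Delta$ to get the per-island $\min\{X_u,k-X_u\}$ bound, control the loss from heavy islands by a binomial tail estimate, and conclude by concentration. The only (immaterial) differences are that the paper concentrates the empirical counts $|U_i|/|U|$ around $B(i)$ via Hoeffding and then computes deterministically rather than applying Chernoff directly to $\sum_u Y_u$, and it bounds the loss term $\sum_{i\ge\hat k}B(i)(2i-k)$ by a careful geometric sum rather than the cruder $k\Pr[X_u\ge\hat k]$, though both fit within the slack built into $y(k)$.
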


\begin{proof}(\textbf{Sketch.The detailed proof can be found in Appendix 
(\ref{sec:weight}).})
The proof builds on the onion fact (\ref{fact:succinct}) as follows: the onion fact shows that "islands" with fewer than $k/2$ errors cannot "lose" error weight modulo the centralizer of ${\cal G}$.
The proof uses standard probabilistic arguments, 
to argue, that the random error pattern we chose, is such, that the vast majority of islands, have fewer than this threshold error weight, and so the overall error weight is
virtually unharmed.
\end{proof}

\subsubsection{Concluding the proof of lemma (\ref{lem:indexp})}
\begin{proof}
By fact (\ref{fact:penalty}) the average penalty of ${\cal E}$ is small, i.e. 
$$\mathbf{E}\left[Penalty({\cal E})\right] \leq |S| D_L p \alpha (1-0.02/k) 
\triangleq P.$$
Yet, by fact (\ref{fact:weight}) w.p. exponentially close to $1$, we have 
$$
wt({\cal E}_{\cal G}) 
\geq 
|S| p y(k) \triangleq W_{low}\geq |S| p \cdot 0.99.
$$
Similarly, by the Hoeffding bound w.p. exponentially close to $1$, we have
$$
wt({\cal E}_{\cal G}) < |S| p (1+0.01) \triangleq W_{high}.
$$

Since all penalties are non-negative, we conclude that {\it conditioned} on
$\left|wt({\cal E}_{\cal G})/ (|S|p) -1\right| < 0.01$, we have
$\mathbf{E}\left[Penalty({\cal E})\right] \leq P+2^{-\Omega(n)}$.
Therefore, there must exist an error ${\cal E}$, whose weight modulo
 ${\cal G}$ deviates
by a fraction at most $0.01$ from $|S| p$, and whose penalty is at most $P+2^{-\Omega(n)}$.

We would like to bound the soundness of this error, which is the ratio 
of the penalty to its relative weight times $D_L$. 
We get that its soundness is at most 
\begin{equation}\label{eq:sound}
r = \frac{P+2^{-\Omega(n)}}{D_L W_{low}}\leq 
\frac{1}{D_L}
\cdot
\frac
{|S| D_L p \alpha (1-0.019/k)}
{|S| p y(k)}
=
\alpha \left(\frac{1-0.019/k}{y(k)}\right).
\end{equation}

We now note that in the last expression,  for all $k\geq 12$, 
the ratio $\frac{1-0.019/k}{y(k)}$ 
is at most $1-0.01/k$. 
For all values of $4 \leq k < 12$ we substitute the appropriate 
value of $y(k)$ and get similarly that 
the ratio $\frac{1-0.019/k}{y(k)}$ is at most $1- 10^{-3}$. 
Hence, the soundness of the error, $r$ is at most $\alpha(d)(1-\gamma_{gap})$ 
where $\gamma_{gap}$ is as defined in the statement of theorem 
(\ref{thm:sound}). 

\end{proof}

\section{Acknowledgements}
The authors would like to thank 
Eli Ben-Sasson, Irit Dinur and Tali Kaufman for insightful discussions.

\begin{appendix}

\section{Proof of Claim \ref{cl:defequiv}}

We prove that 
Definitions (\ref{def:det2}) and (\ref{def:det1}) are equivalent:

\begin{proof}
If definition (\ref{def:det1}) holds then for any $E\in {\cal E}$, and any two orthogonal states of the code $\ket{\phi},\ket{\psi}$ have
$$ 
\langle \phi |E| \psi \rangle = 
\langle \phi |\Pi_C E \Pi_C| \psi \rangle = 
\gamma_E \langle \phi |\Pi_C| \psi \rangle = 
\gamma_E \langle \phi |\psi \rangle =
\gamma_E 0
= 0.
$$
On the other hand, suppose that for any two orthogonal states $\ket{\phi},\ket{\psi}$ in the code, and any $E\in {\cal E}$, we have $\langle \phi |E| \psi \rangle = 0$.
Choose some orthogonal basis of the code $C$  $\left\{ \ket{b_i} \right\}_{i=1}^m$.
Then for each of these basis vectors, we have $\langle b_i |E| b_j \rangle=0$, for $i\neq j$.
Hence, in particular, the operator $E|_C$, i.e., $E$ restricted to $C$, is a diagonal matrix $diag(\lambda_1,\hdots,\lambda_m)$.
We claim, further that $E|_C = \gamma_E I$, for some constant $\gamma_E$, and hence $\Pi_C E \Pi_C = \gamma_E \Pi_C$.
Suppose, on the negative, that there exist two eigenvalues of $E|_C$ that are different, say $\lambda_1 \neq \lambda_2$.
Consider the orthogonal states
$\ket{\phi} = \frac{1}{\sqrt{2}}\left(\ket{b_1}+\ket{b_2}\right)$,
$\ket{\psi} = \frac{1}{\sqrt{2}}\left(\ket{b_1}-\ket{b_2}\right)$.
Then $\ket{\phi},\ket{\psi}$ are in the code by linear closure, and are orthogonal, and yet 
$$
\langle \phi |E| \psi \rangle =
\frac{1}{2}\langle b_1 | E| b_1 \rangle  - 
\frac{1}{2}\langle b_1 | E| b_2 \rangle + 
\frac{1}{2}\langle b_2 | E| b_1 \rangle - 
\frac{1}{2}\langle b_2 | E| b_2 \rangle =
\frac{1}{2}(\lambda_1 - \lambda_2) \neq 0,
$$
contrary to our assumption on $E$. 
\end{proof}

\section{Proofs of geometrical facts on small-set expanders}\label{sec:bipartite}

\subsection{Proof of fact (\ref{fact:essence}):}

For $S\subseteq R$ let $\Gamma_1(S)\subseteq \Gamma(S)$ denote the subset of the neighbors of $S$ with exactly one neighbor in $S$.
Similarly, let $\Gamma_{\geq 2}(S)$ denote
the subset of neighbors with at least two neighbors in $S$.
\begin{proof}
The average degree of a vertex in $\Gamma(S)$ w.r.t. $|S|$
is at most $\frac{D_L S}  {D_L S (1-\eps)} = \frac{1}{1-\eps}$.
Let $\alpha_1$ denote the fraction $|\Gamma_1(S)| / |\Gamma(S)|$, 
where $\Gamma_1(S)$ is the set of neighbors of $S$ with degree exactly 
$1$ with respect to $S$. 
Then 
$$\frac{1}{1-\eps} \ge \alpha_1 1 + (1-\alpha_1) m,$$
where $m$ is the average degree of a vertex with at 
least two neighbors in $S$.
Then by simple algebra

$$\alpha_1(m) \ge 
1 - \frac{1}{m-1}\cdot\frac{\eps}{1-\eps},   
$$
so $\alpha_1(m)$ is a monotonously increasing function of $m$,
and since $m\geq 2$, then $\alpha_1$ is minimized for $m=2$.
Hence, 
$$ \alpha_1 \ge  1-\frac{\eps}{1-\eps}. $$
and since $\eps<1/2$ we have:
$$ \alpha_1 \geq 1 - \eps(1+2\eps) \geq 1-2\eps.$$
\end{proof}

\subsection{Proof of fact (\ref{fact:deg}):}

\begin{proof}
By definition, we have $|\Gamma(S)| \geq |S| D_L (1-\eps)$.
Let $E_{inj}\subseteq E(S)$ be a subset of the edges incident on $S$ such that each $u\in \Gamma(S)$ has 
a single neighbor in $S$ connected by an edge of $E_{inj}$.
Then $E_{inj}$ is of size $|\Gamma(S)|$ which is at least $|S| D_L (1-\eps)$.
Also $|E(S)| = |S| D_L$, thus $|E(S)-E_{inj}| \leq |S| D_L \eps$.
Therefore $\left| \Gamma_{\geq 2} (S) \right| \leq |S| D_L \eps$.
Hence, $\Gamma_1(S) = \Gamma(S)-\Gamma_{\geq 2}(S)$ is of size at least $|S| D_L (1-\eps) - |S| D_L \eps = |S| D_L (1-2\eps)$.
Therefore, when $\eps<1/2$
there exists a vertex $v\in S$ with at least $D_L (1-2\eps)$ neighbors in $\Gamma_1(S)$.
Since $v$ has $D_L$ neighbors in $\Gamma(S)$, then the fraction of neighbors of $v$ with at least two neighbors in $S$ is at most $2\eps$, when $\eps<\frac{1}{2}$.
\end{proof}

\section{Existence of arbitrarily sound 
classical $\LTC$s on small-set expanders}\label{app:classical}
\begin{claim}\label{cl:classical}
For any $\eps\in (0,1/2)$, and $r\in (0,1)$ there exists 
a constant $\delta=\delta(r,\epsilon)$, such that there exists
an explicit infinite 
family of codes $\left\{C_{\eps}(n)\right\}_{n\in \mathbf{N}}$, of $n$ bits,
of constant fractional rate $r$, and constant fractional distance 
$d=d(\epsilon,r)$, 
whose check terms are of locality whose {\it expectation}
is equal to a constant $k$, 
and all errors of weight less than $\delta n$ have soundness 
$r(\delta)\geq 1-3\eps$.
Moreover, the underlying graph of these codes is an $\eps$ small-set expander.
\end{claim}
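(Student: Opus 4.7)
The plan is to combine the Sipser--Spielman expander-code construction \cite{Spi} with the lossless bi-partite expanders of Capalbo, Reingold, Vadhan, and Wigderson \cite{CRVW}. Given the target parameters $r\in(0,1)$ and $\epsilon\in(0,1/2)$, I would first fix the left-degree $D_L$ large enough (as a function of $\epsilon$) so that the CRVW construction provides an explicit family of left-$D_L$-regular bi-partite graphs $G_n=(L,R;E)$ with $|L|=n$, $|R|=m=(1-r)n$ (right-degree $k=D_L/(1-r)$ in expectation), such that every subset $S\subseteq L$ with $|S|\le\delta_0 n$ is $\epsilon$-lossless, i.e.\ $|\Gamma(S)|\ge(1-\epsilon)D_L|S|$, for some constant $\delta_0=\delta_0(\epsilon,D_L,r)>0$. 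This immediately gives the desired $\epsilon$-small-set-expander property.

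Next I would define the code $C_\epsilon(n)$ in the standard Sipser--Spielman way: bits indexed by $L$, and for each constraint-node $c\in R$ the parity check $\bigoplus_{i\in\Gamma(c)} x_i = 0$. Since there are $m=(1-r)n$ linear parity constraints, the dimension is at least $rn$, so the rate is at least $r$. For the distance, I would apply the unique-neighbor argument: any nonzero codeword of support $T$ with $|T|\le\delta_0 n$ would have, by Fact \ref{fact:essence} applied to the lossless expansion, a fraction at least $1-2\epsilon>0$ of $\Gamma(T)$ of degree exactly one in $T$, and any such unique neighbor is a violated parity check, contradicting $T$ being a codeword. Hence the distance is at least $\delta_0 n$.

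For the soundness analysis I would use the same unique-neighbor counting: for any error pattern $S$ with $|S|=\delta n$ and $\delta\le\delta_0$, the number of constraints that have exactly one neighbor in $S$ is at least $(1-2\epsilon)D_L|S|$ by Fact \ref{fact:essence}, and each such constraint is violated by the error. Therefore
\[
R(\delta)\;\ge\;\frac{(1-2\epsilon)D_L|S|}{m}\;=\;(1-2\epsilon)k\delta,
\]
so the relative soundness satisfies $r(\delta)\ge 1-2\epsilon\ge 1-3\epsilon$, with the extra $\epsilon$ slack absorbing rounding issues (e.g.\ to integer degrees) and any polynomial deficits in the CRVW construction for the explicit family.

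The main place that requires care is step one: one must choose $D_L$ and $k$ simultaneously so that (i) the CRVW family yields the required $\epsilon$-losslessness up to some constant fractional set-size $\delta_0$, (ii) the ratio $D_L/k=1-r$ produces the prescribed rate, and (iii) the right-regularity condition (or its expected version) is met; this is why the statement merely asks for locality equal to $k$ \emph{in expectation}. None of these is a real obstacle since all three are monotone in $D_L$, but they do fix the constant $\delta=\delta(\epsilon,r)$ in the final statement. Once the graph is in hand, the distance and soundness bounds both flow from a single clean application of Fact \ref{fact:essence}.
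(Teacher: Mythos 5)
Your proposal is correct and follows essentially the same route as the paper: CRVW lossless expanders plugged into the Sipser--Spielman parity-check construction, with rate from counting constraints, distance from the unique-neighbor argument, and soundness from Fact \ref{fact:essence}. The only (harmless) bookkeeping difference is that Fact \ref{fact:essence} bounds the unique neighbors as a fraction of $\Gamma(S)$ rather than of $D_L|S|$, so the literal chain gives $(1-\eps)(1-2\eps)\ge 1-3\eps$ exactly as in the paper, which is what the claim requires.
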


\begin{proof}
The construction of $\cite{CRVW}$, generates explicitly
for any $\eps,r$ a left-$D_L$-regular bi-partite 
graph $G=(L,R;E)$ such that $|R|/|L|=1-r$,  
and for any subset $S\subseteq L$, $|S|\leq |L| \delta$ the neighbor set of $S$ is of size at least $|S| D_L (1-\eps)$, where $D_L$ is the left degree of $G$.
Note that since the left degree is $D_L$, the average right degree is 
$D_L|L|/|R|=D_L\frac{1}{1-r}$, which is a constant given that $D_L$ is 
a constant. 

The code is defined by assigning to each right node a 
parity check over its incident vertices.  
Let us lower bound the rate of this code: it is at least 
$r=(|L|-|R|)/|L|$, since each constraint in $R$  at most halves the dimension 
of the codespace. The minimal distance of the code is at least $\delta$, 
since any non-zero word of weight at most $\delta$ is rejected, 
since there exists at least one check term
that "sees" just a single bit at state $1$, by Fact (\ref{fact:essence}). 

Hence, these are so-called "good" codes. 
Furthermore, their soundness is at least $1-3\eps$ since 
an error on a set of bits $S$ of size $|S|\leq\delta n$, 
is examined by at least $|S|D_L(1-\epsilon)$ constraints. 
By Fact (\ref{fact:essence}) at least $1-2\epsilon$ of those constraints, examine $S$ 
in exactly one location; all 
constraints that touch a given error set $S$ in exactly one location 
will be violated; hence the total number of constraints that will be 
violated is at least $|S| D_L (1-\eps) (1-2\eps)\geq |S| D_L (1-3\eps)$. 
Therefore, the soundness function $R(\delta')$ is at least $(1-3\eps)\delta' k$, 
for all $\delta'\in [0,\delta]$.
\end{proof}

\section{Proof of Lemma (\ref{lem:Pauli}): decomposition to
cosets of a stabilizer code}\label{sec:lemPauli}

\begin{proof}

For any $E\in \Pi_d^n$, and any $g\in {\cal G}$, we have
$E g = \omega g E$, where $\omega \in \mathbf{C}$.
Therefore, for any $\ket{\eta}$ in $C$, we have $E\ket{\eta}$ is an $\omega$ eigenstate of $g$.
Then for any $E\in \Pi_d^n$, we have that $E C$ is some simultaneous eigenspace of ${\cal G}$.
But, since $\Pi_d^n$, spans over $\mathbf{C}$ all unitaries on $n$ qudits, then it must be that every simultaneous eigenspace of ${\cal G}$
is equal to $EC$ for some $E\in \Pi_d^n$.
In particular, any state 
$\ket{\phi}$ may be written as a sum 
$$ \ket{\phi} = \sum_i E_i \ket{\eta_i},$$
where $E_i\in \Pi_d^n$, and $\ket{\eta_i}\in C$.
\end{proof}

\section{Proof of Claim (\ref{cl:distequiv}) Equivalence of definitions of code distance}\label{sec:distequiv}

We prove that a
stabilizer code $C$ has $dist(C)\geq \rho$ by definition \ref{def:stabdist}, 
iff it has distance $\geq\rho$ by definition \ref{def:qeccdist}. 

\begin{proof}
If the minimal weight of a Pauli in $\mathbf{Z}({\cal G})-A({\cal G})$ has weight at least $\rho$,
then all terms $E\in \Pi_d^n$ of weight strictly less than $\rho$ 
(namely, at most $\rho -1$) are either
spanned by ${\cal G}$, or outside $\mathbf{Z}({\cal G})$.
Take any two orthogonal code states $\ket{\phi},\ket{\psi}$.
If $E\in A({\cal G})$ then all code states are stabilized by $E$, so we have
$\langle \phi | E | \psi \rangle = 1 \cdot \langle \phi | \psi \rangle = 0$.
If $E\notin \mathbf{Z}({\cal G})$, $E$ does not commute with some generator, so in particular, $E$
does not preserve the simultaneous $1$-eigenspace of all generators, namely, 
the code. 
By lemma (\ref{lem:Pauli}), this implies that $E C$ is orthogonal to $C$.
Thus we have in this case as well : $ \langle \phi| E |\psi \rangle = 0$.
Hence the minimal distance of the code, according to definition (\ref{def:qeccdist}) is at least $d$.

Proving the converse, assume that $dist(C)<\rho$, i.e. $min_{E\in \mathbf{Z}({\cal G})-A({\cal G})} wt(E)<\rho$.
Then, there exists $E\in \Pi_d^n$, of weight less than $\rho$, that commutes with all generators of ${\cal G}$ but not spanned by them,
so there exists some state $\ket{\phi}\in C$, such that $E\ket{\phi} \neq \ket{\phi}$, yet $E \ket{\phi}\in C$,
(see \cite{Got}, p. 27).
Thus, there exists a non-zero projection of $E \ket{\phi}$ on some other code state $\ket{\psi}$ orthogonal to $\ket{\phi}$.
Therefore, $\langle \psi | E |\phi \rangle \neq 0$, contrary to definition (\ref{def:det2}).
\end{proof}

\section{Lower-bound on weight: proof of Fact (\ref{fact:weight})}
\label{sec:weight}

\begin{proof}
Let $x \sim B(k,p=1/(10k))$ denote a random variable which is the sum of $k$ 
i.i.d Boolean variables, each equal to $1$ with probability $p$; 
in other words, $x$ is a 
binomial process; $B(i) = Prob(x=i)$.
Let $U$ be a $k$-independent set of size $\Omega(n)$, and ${\cal E}$ be 
the error process defined in Subsection (\ref{sec:err}). 
Let $U_i = \left\{u\in U | wt({\cal E}|_{\Gamma(u)}) = i \right\}$ be the 
set of generators which have exactly $i$ erroneous qudits. 
Using the Hoeffding bound, for a given $i\in [k]$ and a given any constant
$\chi>0$, we have 
\begin{equation}
Prob_{\cal E} \left( \left|\frac{|U_i|}{|U|}-B(i)\right|\geq \chi\right) = 2^{-\Omega(n)}
\end{equation}
By the union bound, we have that for any constant $\chi>0$:
\begin{equation}\label{eq:qecc3}
Prob_{\cal E}\left(\exists i, s.t. \left|\frac{|U_i|}{|U|}-B(i)\right|\geq \chi\right) = 2^{-\Omega(n)}.
\end{equation}
\noindent
Since the set $U$ is a $k$-independent set, then the sets $\left\{\Gamma^{(k)}(u)\right\}_{u\in U}$ are non-intersecting so
\begin{equation}\label{eq:qecc1}
wt({\cal E}_{\cal G}) \ge \sum_{u\in U} wt \left({\cal E}_{\cal G}|_{\Gamma^{(k)}(u)}\right),
\end{equation}
By the Onion fact (Fact \ref{fact:succinct}),
for each $u\in U_i$ we have
$wt \left({\cal E}_{\cal G}|_{\Gamma^{(k)}(u)}\right)\geq min\left\{i,k-i\right\}$, 
hence
$$
wt({\cal E}_{\cal G})
\ge
\sum_{i\in [k]}
|U_i| min\left\{i,k-i\right\}
=
\frac{|S|}{k}
\sum_{i\in [k]}
 \frac{|U_i|}{|U|}min\left\{i,k-i\right\}
$$
using $k|U|=|S|$. Using equation (\ref{eq:qecc3}) w.p. close to $1$ we have 

\begin{equation}\label{eq:qecc4}
wt({\cal E}_{\cal G})\ge \frac{|S|}{k} 
\sum_{i\in [k]} (B(i)- \chi) min \left\{i,k-i\right\} 
\geq 
\frac{|S|}{k}
\left(
\sum_{i\in [k]} B(i) min\left\{i,k-i\right\} - 2^{-k^2}
\right),
\end{equation}
for $\chi = 2^{-k^2}/k^2$.

We separate the rest of the proof to two cases: 
$k\ge 12$ and $4\le k< 12$.  
We start with the case $k\ge 12$. 
Recall $\hat{k} = \floor{k/2}+1$.  
Let
$$ A_{loss} = \sum_{i \geq \hat{k}} B(i) (2i - k).$$
Then by equation (\ref{eq:qecc4}) we have that with probability 
exponentially close to $1$ 
\begin{equation}\label{eq:qecc6}
wt({\cal E}_{\cal G}) 
\geq
\frac{|S|}{k}
\left( 
\sum_{i\in [k]} B(i) i
- 
A_{loss} - 
2^{-k^2} 
\right)
= 
\frac{|S|}{k} 
\left(
 p k - 
2^{-k^2} - 
A_{loss} 
\right)
\end{equation}

In the rest of the proof for $k\ge 12$ 
we upper-bound $A_{loss}$ and substitute 
in the above equation to derive the desired result.
Using an upper-bound of the binomial, we have: 
\begin{equation}\label{eq:binom}
B(\hat{k}) = 	{k \choose \hat{k}} p^{\hat{k}}(1-p)^{\hat{k}}
\leq 
2^{k} \cdot (10k)^{-\hat{k}} (1-p)^{\hat{k}}
\leq
k^{-\hat{k}} 10^{-\hat{k}}  2^k
\leq
2^{-\hat{k}log(k)+k-3.3\hat{k}},
\end{equation}
For any $i\geq \hat{k}$ and $p<1/2$ we have
\begin{equation}\label{eq:binbound}
B(i+1) = B(i) \left(\frac{k-i}{i+1}\right) \left(\frac{p}{1-p}\right) < B(i) \frac{p}{1-p} < 2 p B(i)
\end{equation}
Substituting equations (\ref{eq:binbound}) and (\ref{eq:binom}) in the expression for $A_{loss}$ we have:
\begin{equation}\label{eq:qecc5}
A_{loss} =
\sum_{i \geq \hat{k}}^k B(i) (2i-k)
\leq
2^{-\hat{k}log(k)+k-3.3\hat{k}} 
\sum_{i \geq {\hat{k}}}^k 
(2p)^{(i-\hat{k})} (2i-k)
\end{equation}
\begin{equation}
\leq
2^{-\hat{k}log(k)+k-3.3\hat{k}+1+\hat{k}} 
\sum_{i \geq {\hat{k}}}^k 
(p)^{(i-\hat{k})} (i-\floor{k/2})
\end{equation}
Changing summation $i - \floor{k/2} \mapsto j$ we have the above is at most:
\begin{equation}
2^{-\hat{k}log(k)+k-2.3\hat{k}+1} 
\sum_{j \geq 1}^{{\ceil{k/2}}} p^{-j+1} j
\leq
2^{-\hat{k}log(k)+k-2.3\hat{k}+1} 
\sum_{j \geq 1}^{{\ceil{k/2}}} p^{-j+1} k
\end{equation}
\begin{equation}
\leq
2^{-\hat{k}log(k)+k-2.3\hat{k}+1} 
 k 
\sum_{j \geq 1}^{{\ceil{k/2}}} p^{-j+1}
\leq
2^{-\hat{k}log(k)+k-2.3\hat{k}+1}
 k \cdot 1.1
\leq 
2^{(-\hat{k}+1)log(k)+k-2.3\hat{k}+1.2},
\end{equation}
where in the last inequality we bound the sum by $\sum_{i\geq 0} 1/p^i$,
and set $p = 1/(10k) \le 1/100$, using $k\ge 12$. 
Substituting this value in (\ref{eq:qecc6}) we have that with probability 
$2^{-\Omega(n)}$ close to $1$, 
$$
wt({\cal E}_{\cal G}) \geq 
\frac{|S|}{k}
\left( 
pk
- 
2^{-k^2} 
- 
2^{(-\hat{k}+1)log(k)+k-2.3\hat{k}+1.2} 
\right)
=
$$
$$
\geq
\frac{|S|}{k}
\left( 
pk
- 
2^{(-\hat{k}+1)log(k)+k-2.3\hat{k}+1.21} 
\right)
$$
where in the last inequality we used again $k\ge 12$. 
Continuing, using $p=\frac{1}{10k}$ the above bound is equal to 
$$
=|S| p \left(1 - 2^{(-\hat{k}+1)log(k)+k-2.3\hat{k}+1.21+log_2(10)}\right)
\geq 
|S| p y(k),
$$
for all $k\geq 12$.
For values of $4\leq k < 12$ we substitute directly $k$ in Equation (\ref{eq:qecc4}), evaluate, and show it is at least $|S| p y(k)$.
\end{proof}

\section{Quantum $\PCP$ of Proximity}\label{app:PCPP}

\subsection{Classical $\PCP$s of Proximity}
We begin by presenting the definitions following \cite{BGHSV}. 
A pair language $L$ is a subset of 
$\left\{0,1\right\}^n\times \left\{0,1\right\}^\ell$ for $\ell = poly(n)$.
For a pair language $L$, let 
$L(x) =\left\{y | (x,y)\in L\right\}$.

\begin{definition}

\textbf{$\PCP$ of proximity ($\PCPP$)}

\noindent
For functions $s,\delta: Z^+ \mapsto [0, 1]$, a verifier $V=V(x)$ is a probabilistically checkable proof of
proximity ($\PCPP$) system for a pair language $L$ 
with proximity parameter $\delta$ and soundness error $s$ if the
following two conditions hold for every pair of strings $(x,y)\in 
\left\{0,1\right\}^n\times \left\{0,1\right\}^\ell$:

\begin{enumerate}
\item
Completeness: If $(x,y)\in L$ there exists $\pi$ such that $V(x)$ accepts oracle $y \circ \pi$ with probability $1$.
\item
Soundness: If $y$ is $\delta(|x|)$-far from $L(x)$, then for every $\pi$, the verifier $V(x)$ accepts oracle $y\circ \pi$ with
probability at most than $s(|x|)$.
\end{enumerate}
If $s$ and $\delta$ are not specified, 
then both can be assumed to be constants in $(0,1)$.

\end{definition}

The {\it query complexity} of the verifier $V$ is defined to be the number of 
coordinates that $V$ queries out of $y$ and $\pi$. $V$ is not charged 
for reading $x$ but is charged for reading $y$ even though it is part of 
the input. We notice that this is a more stringent restriction that 
in the case of a $\PCP$ proof; however, the requirements on 
the proof system are weaker - $V$ is supposed to reject only word which 
are {\it far} from words in the language. 

A good pair language to keep in mind is CIRCUIT-VAL, i.e. 
the pairs $(x,y)$, where $x$ is a circuit
on $n$ bits of polynomial size, and $y$ is a string on $n$ bits, 
and $(x,y)\in L$ if $x(y) = 1$, i.e. the circuit
$x$ evaluates to $1$ on input $y$. 
Though this problem lies in $\P$, a
simple argument (Proposition 2.4 in \cite{BGHSV}) shows that 
a $\PCPP$ for CIRCUIT-VAL, implies a 
$\PCP$ for the $\NP$ complete decision language CIRCUIT-SAT, 
the set of all $x$, for which there exists $y$, such that $x(y)=1$.

\subsection{From $\PCPP$s to $\LTC$s}
Given a $\PCPP$, \cite{BGHSV} provides 
a standard construction of an $\LTC$ with related parameters, as follows. 
Given is a $\PCPP$ for membership in a code, 
namely, for the pair language of $(C,w)$, a code and a member in that code. 
Suppose the proximity parameter of the $\PCPP$ is $\delta$, the soundness $s$ 
and the query complexity $k$. Suppose also that we are given a code $C$ 
with distance $D$. 
Then, one can construct an error correcting code $C'$ 
which is an $\LTC$   
with $k$-local constraints, whose weighted distance is $D$, 
and whose soundness is proportional to the soundness $s$. 

$C'$ is defined as the strings $w \circ \pi$ for all $w$ in $C$, 
where $\pi$ is the proximity proof of $w$. 
If one defines the distance by weighting only the coordinates in the first 
register, then $C'$ trivially has the same distance as $C$\footnote{In \cite{BGHSV} this choice of definition of distance is referred to 
as equivalent to the one used in \cite{BGHSV}, in which 
many repetitions of the string $w$ are taken, so that the weight of 
the error on the second, proof, register becomes negligible.} 
The local test for the code $C'$ as an $\LTC$ are the 
$k$-local tests performed by the verifier of the $\PCPP$;
Consider now a word $w'\circ \pi'$ which
 is $\delta$-far from any $w\circ \pi$ in the code 
$C'$, where the distance is measured again by taking into account only the 
coordinates of the left register. This means that $w'$
is $\delta$-far from a word in the code 
$C$, then the tests will reject the word $w'\circ \pi'$ with probability 
$s$, which will thus be the soundness of the code for proximity $\delta$. 

\subsection{Quantum $\PCPP$s}
We now define the quantum analogue of $\PCP$'s of proximity.
We consider quantum pair languages 
$L \subseteq \left\{0,1\right\}^n \otimes {\cal H}_{prf}$, 
where ${\cal H}_{prf}$ is a Hilbert 
space of $\ell$ $d$-dimensional qudits, for $\ell=poly(n)$.
For a quantum pair language $L$ 
let $L(x) = Span\left\{\ket{\psi}\in {\cal H}_{prf}, 
(x,\ket{\psi})\in L\right\}$.
 
\begin{definition}

\textbf{Quantum $\PCP$ of proximity}

\noindent
Fix functions $s,\delta: Z^+ \mapsto [0, 1]$. Let  
$V=V(x)$ be a function from $n$ bit strings $x$
to sets of $m$ $k$-local projections $\{\Pi_i\}_{i=1}^m$, 
each acting on ${\cal H}_{prf} \otimes{\cal H}_{pxmty}$. $V$ is a 
quantum probabilistically checkable proof of
proximity ($\qPCPP$) system, for a quantum pair language $L$, 
with proximity parameter $\delta$ and soundness error $s$, if the
following two conditions hold for every pair $(x,|\psi\ra)$: 

\begin{enumerate}
\item
Completeness: 
If $(x,\ket{\psi})\in L$, 
there exists $\ket{w}\in {\cal H}_{pxmty}$, 
such that for all check terms $\Pi_i\in V(x)$  
$$ \Pi_i \left(\ket{\psi} \otimes \ket{w}\right) = 0.$$
\item Soundness: 
If $\ket{\phi}$ is a quantum state in 
${\cal H}_{prf} \otimes{\cal H}_{pxmty}$ 
whose reduced density matrix to ${\cal H}_{prf}$ 
is supported on states, each
of distance at least $\delta(|x|)$ from $L(x)$, then
$$ \frac{1}{m}\sum_i \langle \phi | \Pi_i |\phi \rangle \ge s(|x|):.$$
\end{enumerate}

\end{definition}

\subsection{From $\qPCPP$s to $\qLTC$s}

Given is a $\qPCPP$ for membership in a quantum code on $\ell$ qudits,  
namely, for the pair language $L$ comprised of pairs $(C,\ket{\psi})$: a 
code (described by $n$ bits), and an $\ell$ qudit state in the code. 
Suppose $L$ has a $\qPCP$ of proximity with parameters 
$\delta,s$ for some 
functions $s,\delta: Z^+ \mapsto [0, 1]$, with projections $\Pi_i$.  
Let $C'$ be the codespace $\subseteq 
{\cal H}_{prf}\otimes {\cal H}_{pxmty}$, defined as: 
$$Span\left\{ \ket{\phi}\otimes \ket{\Pi(\phi)}\mbox{  s.t.  } \ket{\phi}\in C\right\},$$
where $\ket{\Pi(\phi)}$ is some proof of proximity for $\ket{\phi}$ from the 
$\qPCPP$. 
Let $dist_{prf}$ denote the distance from the codespace as 
in definition (\ref{def:distcode}) 
except it only counts non-identity Paulis acting on the 
register ${\cal H}_{prf}$. 

\begin{claim}
$C'$ is a $\qLTC$ with query complexity $k$ and 
soundness $R(\delta)= s$ (where the proximity $\delta$ is
defined with respect to the distance $dist_{prf}$).
\end{claim}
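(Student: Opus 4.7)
The plan is to follow the structure of the classical proof of Proposition 4.4 of \cite{BGHSV}, establishing separately the completeness and soundness properties of $C'$ as a $\qLTC$, using the set of $k$-local projections $\{\Pi_i\}_{i=1}^m$ from the $\qPCPP$ as the check terms; the code is then automatically $k$-local with query complexity $k$, and the energy $\frac{1}{m}\langle \Psi | H |\Psi\rangle$ with respect to $H=\sum_i \Pi_i$ is exactly the $\qPCPP$ average acceptance value.

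For completeness, I would fix a basis $\{|b_i\rangle\}$ of $C$ and, for each $|b_i\rangle$, invoke the completeness clause of the $\qPCPP$ applied to the pair $(C,|b_i\rangle)\in L$ to obtain a proximity proof $|\Pi(b_i)\rangle\in\mathcal{H}_{pxmty}$ with $\Pi_j(|b_i\rangle\otimes|\Pi(b_i)\rangle)=0$ for every $j$. Extending $|\Pi(\cdot)\rangle$ linearly, every $|\chi\rangle\in C'$ lies in $\ker H$, so code states of $C'$ carry zero energy as required.

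For soundness, the strategy is to reduce the $\qLTC$ soundness condition to the $\qPCPP$ soundness clause. Given any $|\Psi\rangle$ with $\mathrm{dist}_{prf}(|\Psi\rangle, C')\geq \delta\ell$, it suffices to show that the reduced density matrix $\rho_{prf}=\mathrm{Tr}_{pxmty}(|\Psi\rangle\langle\Psi|)$ is supported on states each at distance at least $\delta\ell$ from $C=L(C)$ in the sense of Definition \ref{def:distcode}. Once this key lemma is in hand, the $\qPCPP$ soundness immediately gives $\frac{1}{m}\langle\Psi|H|\Psi\rangle\geq s$, completing the soundness argument.

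The heart of the matter is the key lemma, which I would prove by contrapositive. Suppose some eigenstate $|\phi\rangle$ of $\rho_{prf}$ with positive weight is at distance $<\delta\ell$ from $C$, so there exist $|\psi\rangle\in C$ and a prf Pauli $E$ of weight $<\delta\ell$ with $\langle\psi|E|\phi\rangle\neq 0$. Writing $|\Psi\rangle=\sum_j \sqrt{p_j}|\phi_j\rangle\otimes|e_j\rangle$ in Schmidt form with $|\phi_1\rangle=|\phi\rangle$, the natural witness of closeness is $|\chi\rangle=|\psi\rangle\otimes|\Pi(\psi)\rangle\in C'$, and one computes
\[
\langle\chi|(E\otimes I)|\Psi\rangle \;=\; \sum_j \sqrt{p_j}\,\langle\psi|E|\phi_j\rangle\,\langle\Pi(\psi)|e_j\rangle,
\]
and the aim is to show this is nonzero, which would contradict $\mathrm{dist}_{prf}(|\Psi\rangle,C')\geq \delta\ell$. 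The hard part will be controlling this sum, since cancellation across Schmidt indices, together with the fact that the pxmty part of $|\Psi\rangle$ is fixed (not free to pick), can in principle kill the overlap even when the $j=1$ term has a nonzero prf factor $\langle\psi|E|\phi\rangle$. To overcome this I expect one must either exploit the unitary freedom in the Schmidt decomposition, average over choices of $|\psi\rangle$ inside the relevant neighborhood of $|\phi\rangle$ in $C$, or invoke a repetition/amplification trick in the spirit of the many-copies device used in the classical construction of \cite{BGHSV}, which effectively forces any pxmty content to be witnessed by legitimate proximity proofs. Granted this technical step, the three ingredients above — completeness, the key lemma, and $\qPCPP$ soundness — combine to give the claimed $\qLTC$ with query complexity $k$ and soundness $R(\delta)=s$.
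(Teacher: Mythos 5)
Your overall structure is the same as the paper's: use the $\qPCPP$ projections as the check terms, get completeness from the $\qPCPP$ completeness clause, and reduce soundness to the $\qPCPP$ soundness clause by showing that the reduced density matrix on ${\cal H}_{prf}$ is supported on (Schmidt) states each $\delta\ell$-far from $C$. The paper dismisses this last step with ``it is easy to see''; you correctly isolate it as the crux, but you then leave it unproven, and the escape routes you sketch (unitary freedom in the Schmidt decomposition, averaging over $\ket{\psi}$, a repetition trick) are not what closes it. So there is a genuine gap, though a small one, and your diagnosis of where the difficulty sits is slightly off.

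The fix is that you have tied your own hands by taking the error to be $E\otimes I$. The distance $dist_{prf}$ charges only for the support of the Pauli on ${\cal H}_{prf}$, so the witness error may be $E\otimes F$ with $F$ an \emph{arbitrary} Pauli on ${\cal H}_{pxmty}$ at no cost. With $\ket{\Psi}=\sum_j \sqrt{p_j}\,\ket{\phi_j}\otimes\ket{e_j}$ in Schmidt form and $\ket{\chi}=\ket{\psi}\otimes\ket{\Pi(\psi)}\in C'$, one has
\begin{equation*}
\bra{\chi}\left(E\otimes F\right)\ket{\Psi}
\;=\;
\bra{\Pi(\psi)}\,F\,\ket{w},
\qquad
\ket{w}\;=\;\sum_j \sqrt{p_j}\,\bra{\psi}E\ket{\phi_j}\,\ket{e_j}.
\end{equation*}
Since the $\ket{e_j}$ are orthonormal and the $j=1$ coefficient $\sqrt{p_1}\bra{\psi}E\ket{\phi}$ is nonzero by hypothesis, $\ket{w}\neq 0$; and since the Pauli group spans all operators on ${\cal H}_{pxmty}$ over $\C$ (Lemma \ref{lem:Pauli}), if $\bra{\Pi(\psi)}F\ket{w}$ vanished for every Pauli $F$ it would vanish for $F$ replaced by $\ketbra{\Pi(\psi)}\cdot$ -- more precisely for the operator $\ket{\Pi(\psi)}\bra{w}$ -- forcing $\ket{w}=0$ or $\ket{\Pi(\psi)}=0$, a contradiction. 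Hence some $F$ gives a nonzero overlap, the prf-weight of $E\otimes F$ equals $wt(E)<\delta\ell$, and the contrapositive of your key lemma follows; no cancellation across Schmidt indices can survive the freedom in $F$. With this inserted, your argument matches the paper's and is complete (and is in fact more explicit than the paper's own one-line assertion).
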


\begin{proof}
Set $\{\Pi_i\}_{i=1}^m$ as the check terms for $L(C)$.  
These are $k$-local terms, so $C'$ has query complexity $k$.
By definition of the quantum $\PCP$ of proximity, for 
any state $\ket{\phi}$ in the codespace of $C'$, we have 
$\Pi_i \ket{\phi}=0$, for any $\Pi_i\in V(C')$.
Let us assume now that $dist_{prf}(\ket{\phi},C')\geq \delta(|C|)\cdot \ell$.
Then by Definition \ref{def:distcode}, for any Pauli operator $E$ acting on 
${\cal H}_{prf}\otimes {\cal H}_{pxmty}$, whose support on 
${\cal H}_{prf}$ is at most $\delta(|C|)\cdot \ell - 1$, 
we have that $E\ket{\phi}$ is still orthogonal to $C'$.
In particular, for any $E$ whose support is 
contained in ${\cal H}_{prf}$, and whose weight is at most 
$\delta(|C|)\cdot \ell-1$, we have
that $E\ket{\phi}$ is still orthogonal to $C'$.
It is easy to see that the reduced state of $\ket{\phi}$ to 
${\cal H}_{prf}$ is a mixture of orthogonal 
states $\left\{\ket{\eta_i}\right\}_i$, each of which is at least 
$\delta(|C|)\cdot \ell$-far from $C$.
By virtue of the soundness of the $\qPCPP$, $\ket{\phi}$ 
will be rejected by the check terms $\Pi_i$ with probability at least $s(|C|)$.
\end{proof}

\end{appendix}

\end{document}